\long\def\drop#1{}
\mathchardef\ordinarycolon\mathcode`\:
\def\iindex#1{\index{#1}#1}
\long\def\red#1{}
\def\R{\mathbb R}
\def\X{\mathcal X}
\def\Y{\mathcal Y}
\def\aD{\mathcal D}
\def\aZ{\mathcal Z}
\def\aG{\mathcal G}
\def\aK{\mathcal K}
\def\E{\mathcal E}
\def\aF{\mathcal F}
\def\P{\mathscr P}
\def\aP{\mathcal P}
\def\M{\mathcal M}
\def\B{\mathcal B}
\def\cD{\mathcal D}
\def\H{\mathcal H}
\def\D{\mathrm{D}}
\def\Vb{V_b} 
\def\Vi{V_i} 
\def\Tan#1{\mathop{\mathrm{Tan}_{#1}}}
\def\isdef{\coloneqq}
\def\bc{\boldsymbol c}
\def\bj{\boldsymbol j}
\def\bw{\boldsymbol w}
\DeclareMathOperator\Ent{\mathsf{Ent}}
\DeclareMathOperator\Prob{Prob}
\DeclareMathOperator\diag{diag}
\DeclareMathOperator*\argmin{argmin}
\DeclareMathOperator\range{range}
\def\Lebesgue{\mathcal L}
\providecommand{\abs}[1]{\left|#1\right|}
\let\e\varepsilon
\def\d{\mathrm d}
\def\un#1{\ensuremath{\mathop{\mathrm{#1\strut}}}}
\def\dual#1#2{\langle #1,#2\rangle}
\def\div{\mathop{\mathrm{div}}}
\DeclareMathOperator{\supp}{supp}
\let\ds\displaystyle
\def\nb#1{\par\medskip\noindent\textbf{#1}}
\newtheorem{theorem}{Theorem}
\newtheorem{lemma}[theorem]{Lemma}
\newtheorem{definition}[theorem]{Definition}
\newenvironment{remark}%
  {\par\medbreak\refstepcounter{equation}%
    \noindent\textbf{Remark~\thelemma. }}%
  {\qed\par\medskip}
\def\XXint#1#2#3{{\setbox0=\hbox{$#1{#2#3}{\int}$}
     \vcenter{\hbox{$#2#3$}}\kern-.5\wd0}}
\newenvironment{danger}{%
  \medbreak\par\clubpenalty=10000
  \footnotesize}{\par\medbreak}
\definecolor{shadecolor}{rgb}{0.95,0.95,0.95}
\newenvironment{example}%
{
\begin{MakeFramed}{\FrameSep1cm\advance\hsize-\width \FrameRestore}}%
{\end{MakeFramed}}
\newcounter{exercise}
\def\theexercise{\arabic{chapter}.\arabic{section}.\arabic{exercise}}
\newenvironment{exercise}{
  \refstepcounter{exercise}
  \par\medskip\noindent
  \textbf{Exercise \theexercise. }}%
  {\par\medbreak}
\definecolor{MyGray}{rgb}{0.92,0.99,0.98}
\newdimen\grayboxwidth
\newdimen\grayboxmargin
\begin{document}

\title{Variational Modelling:\\ Energies, gradient flows, and large deviations}
\author{Mark Peletier}

\date{\today\\Version 1.2
\\[1cm]
}
\pagenumbering{Alph}
\maketitle

\pagenumbering{arabic}

\section*{Preface}


\section*{Acknowledgements}
Thanks to David Bourne, Alex Cox, Manh Hong Duong,  Joep Evers, Alexander Mielke, Georg Prokert,  Michiel Renger, and Erlend Storr\o sten for comments on earlier drafts of these notes; and thanks to Upanshu Sharma for contributing Section~\ref{sec:degeneracy-dynamics}.

\section*{Notation}

\begin{minipage}{12cm}
\begin{small}
\begin{tabular}{lll}
$\|\cdot\|_{1,\rho}$ and $(\cdot,\cdot)_{1,\rho}$ & dual Wasserstein norm  and inner product & Sec.~\ref{sec:WassersteinMetricTensor}\\
$\|\cdot\|_{-1,\rho}$ and $(\cdot,\cdot)_{-1,\rho}$ & Wasserstein norm and inner product & Sec.~\ref{sec:WassersteinMetricTensor}\\

$C^k_b(\Omega)$ & space of functions with $k$ ct. and bounded derivatives\\
$\Ent$          & entropy $\rho \mapsto \int \rho\log\rho$ & \eqref{def:Ent}\\
$\H(\rho|\mu)$  & relative entropy of $\rho$ with respect to $\mu$ & Def.~\ref{def:H}\\
$k$				& Boltzmann's constant; $k=1.3806488\cdot 10^{-23} \un{m^2\, kg\, s^{-2}\, K^{-1}}$\\
$\Lebesgue^d$   & $d$-dimensional Lebesgue measure\\
$\M(\Omega)$    & Borel regular, locally finite measures on $\Omega$ & Appendix~\ref{app:measure-theory}\\
$N_A$			& Avogadro's number; $N_A = 6.0221413\cdot 10^{23}$\\
$\P(\Omega)$    & probability measures on $\Omega$ & Appendix~\ref{app:measure-theory}\\
$P_z\aZ$ and $P\aZ$ & process space at $z\in \aZ$ and process bundle & Sec.~\ref{sec:tangents-and-processes}\\
$\aP$           & mapping from process space to tangent space & Sec.~\ref{sec:tangents-and-processes}\\
$R$             & universal gas constant, $R =kN_A \simeq 8.314 \,\un{J\, K^{-1}\,mol^{-1}}$\\ 
$T_z\aZ$, $T\aZ$ & tangent space at $z\in \aZ$ and tangent bundle & Sec.~\ref{sec:tangents}\\        
$W_2(\rho_0,\rho_1)$ & Wasserstein distance between $\rho_0$ and $\rho_1$ & Sec.~\ref{sec:WassersteinDistance}\\
$\Psi$, $\Psi^*$ & dissipation and dual dissipation potentials & Secs.~\ref{sec:GF-intro}, \ref{sec:GF-formal}\\
$\aZ$           & abstract name of the state space 
\end{tabular}
\end{small}
\end{minipage}

\bigskip


\red{Need to think about the confusion surrounding the concept of energy. Two solutions: 1. Include a separate section detailing all the different energetic concepts? 2. Consistently use 'driving functional' for the thing driving a gradient flow?}

\red{Meer grijze blokjes!}

\red{Need to give $R$ its own font to distinguish it from the radius $R$}

\tableofcontents

\chapter{Introduction}
\section{Variational modelling}

Modelling is the art of taking a real-world situation and constructing some mathematics to describe it. It's an art rather than a science, because it involves choices that can not be rationally justified. Personal preferences are important, and `taste' plays a major role. 

Having said that, some choices are more rational than others. And we often also need to explain our choices to ourselves and to our colleagues. In these notes we describe a framework for making and explaining such choices, a framework that we call \emph{variational modelling}, which provides both a workflow for deriving a model and a body of theory to support it. 

We restrict ourselves to the class of \emph{dissipative systems}, which are systems driven by dissipation---dissipation of entropy or free energy, for instance---and more precisely, to the subclass of these systems with the mathematical structure of \emph{gradient flows}. In its simplest form, and in abstract terms, variational modelling of such a  system consists of choosing a state space $\aZ$, a driving functional $\aF$, and a \emph{dissipation mechanism} $\aD$. Once these three are chosen, the evolution equations can be derived, typically in the form of differential or differential-integral equations. The crucial point is that the triple $(\aZ,\aF,\aD)$ contains \emph{all the modelling choices}: it completely determines the model. 

Therefore all discussion about the modelling necessarily focuses on the choice of $(\aZ,\aF,\aD)$:
\begin{quote}
\emph{What are the modelling assumptions that underlie these choices? }
\end{quote}
In these notes we will explain various possible choices, discuss their features, and provide the background and theory that we will need to give a good answer to this question. We will also discuss various generalizations of the simple version above, for instance to `generalized' gradient flows and systems with inertia. Finally, throughout the notes we will give many examples to illustrate the concepts. 

\section{Gradient flows}
\label{sec:GF-intro}
\emph{Gradient flows} are systems that can be written in one of the forms
\begin{subequations}
\label{eq:GF-intro}
\begin{align}
\label{eq:GF-intro-K}
\dot z &= -\aK(z) \aF'(z) \qquad \text{or}\\
\aG(z)\dot z &= - \aF'(z).
\label{eq:GF-intro-G}
\end{align}
\end{subequations}
Here $z\in\aZ$ is the state of the system, $\aF'(z)$ is the (Fr\'echet) derivative of $\aF$ at $z$, and the linear operators $\aK$ and~$\aG$ are two alternative ways of characterizing the dissipation mechanism  $\aD$ mentioned above. Both are assumed to be non-negative and symmetric (all these notions are made concrete in Chapter~\ref{ch:GF}). Whenever $\aK$ and $\aG$ are invertible we assume $\aK = \aG^{-1}$, and the two formulations above are exactly equivalent.

Such systems are driven by the functional $\aF$: the functional $\aF$ decreases along solutions, since the non-negativity of $\aK$ and $\aG$ implies that 
\begin{equation}
\label{ineq:dissipation-intro}
\frac d{dt} \aF(z(t)) = \aF'\cdot \dot z = - \aF' \cdot \aK(z)\aF' = -\dot z \cdot \aG(z) \dot z \leq 0.
\end{equation}
In fact the evolution can be assumed to decrease $\aF$ as fast as possible, where the meaning of \emph{as fast as possible} is given by the operators $\aK$ and $\aG$, i.e. by the dissipation mechanism (see Chapter~\ref{ch:GF}). 

In the modelling of real-world systems, $\aF$ is often an \emph{energy}, which makes equation~\eqref{eq:GF-formal-G} a \emph{force balance:} the right-hand side is the derivative of the energy, i.e.\ a generalized force. The left-hand side can be interpreted as the (generalized) force necessary for the instantaneous (generalized) velocity $\dot z$.

The right-hand side of~\eqref{ineq:dissipation-intro} contains two quadratic forms that will often return. These are related to the \emph{dissipation potential} and the \emph{dual dissipation potential}
\[
\Psi(z,s) := \frac12 s\cdot \aG(z)s \qquad\text{and}\qquad 
 \Psi^*(z,\xi) := \frac12 \xi \cdot \aK(z) \xi.
\]
For fixed $z$, these are dual quadratic forms in $s$ and $\xi$ in the sense of Legendre transforms (see~\eqref{id:duality-GK-PsiPsistar}), 
\begin{align*}
\Psi^*(z,\xi) &= \sup_s \bigl[ \xi\cdot s - \Psi(z,s)\bigr]\\
\Psi(z,s) &= \sup_\xi \bigl[ \xi\cdot s - \Psi^*(z,\xi)\bigr].
\end{align*}
This relation between $\Psi$ and $\Psi^*$ gives a natural generalization of the relationship between $\aG$~and~$\aK$ to the case of non-invertible operators. It also shows, in a very general sense, how it is equivalent to choose any one of the four objects $\aG$, $\aK$, $\Psi$, or $\Psi^*$---the other three follow automatically. 

A central tool in these notes is a reformulation of~\eqref{eq:GF-intro-G} in terms of $\Psi$. 
At any given point $z$,  equation~\eqref{eq:GF-intro-G} is the stationarity condition associated with the minimization problem
\begin{equation}
\label{minpb:GF-general}
\inf_s \Psi(z,s) + \aF'(z)\cdot s.
\end{equation}
If~$\aG$ is invertible, then this expression is strictly convex in $s$ for fixed $z$, and minimizers of~\eqref{minpb:GF-general} are unique. The minimizer $s$ is then equal to the unique solution $\dot z$ of~\eqref{eq:GF-intro-G}. On the other hand, when $\aG$ is not invertible, then~\eqref{minpb:GF-general} suggests a natural extension (see Section~\ref{sec:GF-formal}).


\section{Making choices}

The modelling procedure that we follow in these notes consists of four steps:
\begin{enumerate}
\item Choose \emph{state space} $\aZ$;
\item Choose a {functional} $\aF$ that \emph{drives} the evolution;
\item \label{choice:dissipation} Choose the \emph{dissipation mechanism}, expressed by $\aG$, $\aK$, $\Psi$, or $\Psi^*$;
\item Derive the equations.
\end{enumerate}
Often, Step~\ref{choice:dissipation} consists of two parts:
\begin{enumerate}
\item[\ref{choice:dissipation}a.] Choose the space of all \emph{processes} that participate in changing the state;
\item[\ref{choice:dissipation}b.] Choose the \emph{dissipation potential} $\Psi$ as a function of these processes.
\end{enumerate}

The purpose of these notes is to describe how to do this for a reasonable class of systems, and to explain how one can understand the choices that are made. 

\medskip

One important choice is not yet listed: the choice to use a gradient-flow, or a generalized gradient flow (Section~\ref{sec:GF-formal}) to describe the system. In practice this choice means two things: first, that inertia plays no role, and second, that the coordinate frame is inertial. 

Inertia becomes unimportant when viscous or frictional forces dominate the inertial forces. We work out an example of this in the first system in Chapter~\ref{ch:examples}. 
Gradient-flow-like systems that \emph{do} include inertia have been developed in various forms;  currently the most popular version is called GENERIC (General Equation for Non-Equilibrium Reversible-Irreversible Coupling~\cite{Oettinger05}). It includes generalized gradient flows as a special case, and like generalized gradient flows, provides a form of inherent thermodynamic consistency.

Choosing an inertial frame is important in order to avoid ficticious forces. When the energy is minimal, its derivative is zero, and by~\eqref{eq:GF-intro} therefore the system should be stationary. For instance, an object that is stationary in an inertial frame is rotating when viewed from a rotating frame, and therefore experiences a ficticious, centripetal force---even though the energy might be minimal.

\bigskip

\textbf{Note:} Often the functional $\aF$ that we choose will be the `free energy' that is well known in thermodynamics and statistical mechanics. Many modellers, indeed, would start by postulating that this free energy drives the evolution, without further comment. In these notes we want to presume very little, and therefore we will show \emph{when}, \emph{how}, and \emph{why} the free energy drives the evolution.

\section{Overview}

These notes are structured as follows.
\begin{itemize}
\item Chapter~\ref{ch:examples} describes the application of the variational-modelling framework to three examples: a spring-dashpot system, the slow flow of a viscous fluid, and the diffusion of solutes in water. 
\item In Chapter~\ref{ch:GF} we recall some of the basic facts of gradient flows, including the Wasserstein metric and Wasserstein gradient flows.
\item The third example in Chapter~\ref{ch:examples}, on diffusion, involves the concepts of entropy, free energy, and Wasserstein dissipation. In Chapter~\ref{ch:entropy-free-energy-stationary} we study entropy and free energy, and create an understanding for why they appear in the driving functional of a gradient-flow system.
In Chapter~\ref{ch:Wasserstein-dissipation} we do the same for the Wasserstein dissipation.
\item In Chapter~\ref{ch:further-examples} we discuss further examples of the theory.
\item In Chapter~\ref{ch:thermodynamics} we comment on the relation between the variational-modelling framework on one hand and thermodynamics on the other.
\end{itemize}
We conclude with an appendix on the concepts of measure theory that we use.

\chapter{Examples}
\label{ch:examples}

In this first chapter we discuss three examples---models whose main purpose is to provide illustrations of the general principles in a simple setting. The variational-modelling framework is overpowered for these simple systems, and the modelling may therefore feel slightly clumsy. The benefit of the simplicity is of course that we can concentrate on the essence, and show in a simple, well-known setting, how the framework functions. 

The examples are (a) a spring-dashpot system, (b) slow viscous flow, and (c) diffusion of solutes in a fluid. The third example, on diffusion, will  figure as a central example in the rest of these notes, since it illustrates the important concepts of entropy and  Wasserstein dissipation.

\nb{Dimensions:} Throughout these notes, we work with mathematical objects that have dimensions. As a result all the usual physical constants will appear, and  these constants also help in understanding the origin and interpretation of various terms. Of course, once a model has been constructed, it is often wise to scale the variables and parameters in a different way to facilitate analysis and numerical approximation.

\section{A spring-dashpot system}

As described in the previous chapter, in the variational-modelling framework the construction of a model consists of choosing three entities: the state space, the driving functional, and the dissipation mechanism. In our first example we illustrate this on a very simple system: a spring and a dashpot that are connected---see Figure~\ref{fig:springdashpot}. 

\begin{figure}[htb]
\labellist
\pinlabel $x$ by 0 2 at 54 3
\endlabellist
\centering
\includegraphics[width=5cm]{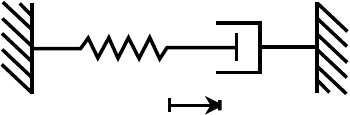}
\caption{A simple spring-dashpot system. The variable $x$ is the extension of the spring and at the same time the position of the dashpot.}
\label{fig:springdashpot}
\end{figure}

\nb{State space: } We characterize the state of the system by the single variable $x\in \R$, the extension of both the spring and the dashpot. 

\nb{Energy: } We choose for the energy of this system the potential energy of the spring $\aF(x)$; for a linear spring this would be $kx^2/2$, where $k$ is the spring constant. 

\nb{Dissipation: } We assume that energy is dissipated in the dashpot, and only there. We assume that the dashpot is linear, which means that the force $f$ on the dashpot and the velocity $v$ of the plunger are related by $f = \eta v$, where $\eta$ is a constant characterizing the dashpot.  Since $\aG\dot z$ should be interpreted as the force necessary to move the system with velocity $\dot z$ (see Section~\ref{sec:GF-intro}), the relation $f=\eta v$ is  the same as $\aG v = \eta v$ for any $v\in\R$. Correspondingly $\aK \xi = \xi/\eta$ for all $\xi$. In this case $\Psi(x,s) = \eta s^2/2$ and $\Psi^*(x,\xi) = \xi^2/2\eta$.

\nb{Equation: }
Applying~\eqref{eq:GF-intro-K} we find the equation $\dot x = -\aK\aF'(x) = -\eta^{-1} \aF'(x)$. For the linear spring $\aF(x) = kx^2/2$ it becomes $\dot x = -\eta^{-1}kx$.

\subsection*{Discussion}

\paragraph{Assumptions and conclusions.}
The list of boldfaced terms above highlights the assumptions that we made. In this context, the concepts and assumptions are fairly standard: the force of the spring is the (negative) derivative of the energy, and the velocity of the plunger in the dashpot is linearly related to the force on the plunger. 

\paragraph{No inertia.}
As we mentioned above, an essential assumption is that inertia plays no role---or more accurately, that inertia can be neglected. In practice this comes down to requiring that the viscous forces dominate the inertial forces in the system. 

As an example, consider the case of a spring-dashpot system with mass, for instance (Figure~\ref{fig:massspringdashpot}).
\begin{figure}[htb]
\labellist
\setlabel{x}{94}60{2}
\setlabel{k}{51}{40}0{-2}
\setlabel{\eta}{48}10{2}
\setlabel{m}{100}{23}00
\endlabellist
\centering
\includegraphics[width=5cm]{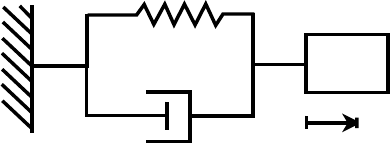}
\caption{A non-gradient-flow system: a mass connected to a spring and a dashpot}
\label{fig:massspringdashpot}
\end{figure}
Assuming a linear spring (with constant $k$) and a linear dashpot (with constant $\eta$), the equation for the position $x$ of the mass is 
\[
m\ddot x + \eta \dot x + k x = 0.
\]
Rescaling this equation by defining a dimensionless time $\tau$ in terms of the old time $t$, $\tau = t \,\eta/k$, we find
\[
\frac{mk}{\eta^2} \partial_{\tau\tau} x + \partial_\tau x +  x = 0.
\]
When the dimensionless combination $mk/\eta^2$ is small, this equation is a singular perturbation of the viscous equation $\partial_\tau x +x=0$, or in dimensional terms, $\eta\dot x + kx=0$. Apart from a fast accommodation at time zero, the solutions of the full system follow the solution of the `outer equation' $\eta\dot x + kx=0$  (see e.g.~\cite[Ch.~5]{Hinch91}).

\section{Slow viscous flow}
\label{sec:slow-viscous-flow}

%

The second example is that of the slow flow of an incompressible, highly viscous liquid, with variable mass density, driven by gravity. The word `slow' means that inertia will be neglected, and then a gradient-flow structure  is natural.

\nb{State space: }We choose a bounded set $\Omega\subset \R^3$ to be the container with the fluid. 
The state variable will be the mass density $\rho:\Omega\to[0,\infty)$ (in $\un{kg \, m^{-3}}$). The state space is therefore $\aZ:= \{\rho\in L^1(\Omega): \rho\geq0\}$. Positions in $\Omega$ are labelled $x$.

\nb{Energy: }We choose as driving functional for this system the gravitational potential
\[
\aF: \aZ \to \R, \qquad \aF(\rho) := g\int_\Omega x_3\rho(x)\, dx,
\]
where $g \approx 9.8\, \un{m\, s^{-2}}$ is the gravitational acceleration and where we choose the last coordinate $x_3$ of $x\in \R^3$ to be the vertical coordinate. 

\nb{Processes: }The system can evolve through a \emph{flow field} $u:\Omega\to\R^3$ satisfying the incompressibility condition $\div u = 0$ in $\Omega$ and the no-slip condition $u=0$ at $\partial \Omega$. For a given flow field $u$, the density $\rho$ evolves according to 
\begin{equation}
\label{eq:proc-tang-conn-Stokes}
\dot \rho + \div \rho u = 0 \qquad \text{in }\Omega.
\end{equation}
Note that since $u=0$ on the boundary $\partial\Omega$, this evolution preserves the total mass $\int_\Omega\rho$. We call the flow field $u$ a \emph{process vector}, since it characterizes all possible processes in this model; it generates a corresponding rate of change in the state, $\dot \rho$, through~\eqref{eq:proc-tang-conn-Stokes}.

\nb{Dissipation mechanism: }Energy is dissipated by viscous friction: the layers of liquid sliding over each other generate heat. This dissipation is characterized by the dissipation potential
\[
\widetilde \Psi(u) := \frac\eta 2 \int_\Omega |\e(u)(x)|^2 \, dx,
\]
where $\eta>0$ is a parameter (the `dynamic viscosity' of the fluid, in $\un{kg\,m^{-1}\, s^{-1}}$), and $\e(u) := \frac12 (\nabla u + \nabla u^T)$ is the symmetric part of $\nabla u$. This function $\widetilde\Psi$ is one-half of the actual dissipation of energy (see e.g.~\cite[(16.3)]{LandauLifshitzVI87} and the discussion below).

\nb{Equations: }The resulting equations are now determined by solving the minimization problem
\begin{equation}
\label{minpb:Stokes-intro}
\inf_u \widetilde \Psi(u) + \langle \aF'(\rho),-\div\rho u\rangle,
\end{equation}
We introduced this minimization problem in Section~\ref{sec:GF-intro}, in the case of a dissipation potential $\Psi$ defined on tangent vectors $\dot \rho$, instead of the potential $\widetilde \Psi$ defined on flow fields $u$. We discuss the difference between these two in Section~\ref{sec:tangents-and-processes}. The dual pairing is the pairing of the Fr\'echet derivative $\aF'$ with a tangent vector (Section~\ref{sec:tangents}).

Writing the expression in~\eqref{minpb:Stokes-intro} as
\[
\frac\eta 2 \int_\Omega |\e(u)|^2 - g \int_\Omega x_3 \div\rho(x) u(x) \, dx
= \frac\eta 2 \int_\Omega |\e(u)|^2 + g \int_\Omega e_3 \rho u,
\]
where $e_3$ is the unit vector for the third coordinate, we calculate the stationarity condition for~\eqref{minpb:Stokes-intro} to be
\[
\forall \tilde u: \qquad 0 = \eta\int_\Omega \e(u) : \e(\tilde u)  + g \int_\Omega e_3 \rho \tilde u - \int_\Omega p \div \tilde u,
\]
where $p$ is a Lagrange multiplier associated with the condition $\div u = 0$, and the identity holds for all $\tilde u$ that vanish on $\partial \Omega$ (even those that are not divergence-free). 
By applying partial integration we deduce that
\begin{alignat*}2
-\div\, [\eta \e(u) - pI] &= -e_3\rho  & \qquad & \text{in }\Omega,\\
\div u &= 0&& \text{in }\Omega,\\
u &= 0 && \text{on }\partial \Omega.
\end{alignat*}

\section*{Discussion}

\paragraph{Modelling choices.}
As in all the examples of these notes, the lack of inertial effects is important. In this case it brings us to the Stokes equations instead of the Navier-Stokes equations.

Note how in this case we choose the state space $\aZ$ separately from the space of \emph{processes}. This is a feature that we will encounter more often, since from a modelling point of view the \emph{state} often has little in common with the \emph{mechanisms} that can change  the state.

\paragraph{Quantifying the dissipation mechanism.}
In the spring-dashpot example we could specify either $\aG$ or $\aK$; both were relatively simple to characterize. In this case it is more straightforward to specify $\Psi$ than any of the other three possibilities $\aG$, $\aK$, and $\Psi^*$.

In the case of quadratic dissipation potentials, as in this example, there is a useful connection between the dissipation \emph{potential} and the actual dissipation $D$ of energy. This deserves its own discussion.

\paragraph{Dissipation and dissipation potential}
\label{dissipation-dissipation-potential}
A consequence of the gradient-flow, or dissipative structure that we use here, is the free-energy-dissipation identity
\[
\aF(\rho_T) +\int_0^T D(u_t)\, dt  = \aF(\rho_0),
\]
or the differential version,
\begin{equation}
\label{eq:energy-dissipation-differential}
\partial_t \aF(\rho_t) = -D(u_t),
\end{equation}
where we use subscripts to indicate dependence on time. Here $u_t$ is assumed to be the minimizer of~\eqref{minpb:Stokes-intro}, and the \emph{dissipation} $D$ is defined as 
\[
D(u) := \eta \int_\Omega {|\e(u)|^2}\, dx.
\]
By~\eqref{eq:energy-dissipation-differential} $D$ is indeed the dissipation of energy $\aF$ per unit of time. 

The \emph{dissipation} $D$ and the \emph{dissipation potential} $\Psi$ are closely related, but different. The dissipation $D$ quantifies the actual conversion of  energy into heat; the dissipation potential $\Psi$ is the object in the minimization problem~\eqref{minpb:Stokes-intro}. The general formula is $D(u) = \langle \partial_u\Psi(u),u\rangle$, and when $D$ and $\Psi$ are quadratic in $u$, this expression simplifies to $D = 2\Psi$.

\section{Diffusion of particles in a fluid}
\label{sec:diffusion-particles-fluid}

Our third example will be central in the discussion about `how' and `why': it will be the main example on which we illustrate the methods. As in  the previous examples, we already know the equations for this system, but we will re-derive them using the variational framework. We will make choices for the state space, the driving functional, and the dissipation structure, and we will show how they reproduce the known equations for this system. 

For this system, however, the choice of driving functional and dissipation are not immediately clear---and we will not explain them in this chapter. Indeed, this is exactly the \emph{raison d'\^etre} of these notes: the variational modelling philosophy is powerful, but the assumptions underlying it are not always obvious. 
The purpose of this example is therefore to provide a test system, on which we will develop and test the understanding in Chapters~\ref{ch:entropy-free-energy-stationary} and~\ref{ch:Wasserstein-dissipation}.

\bigskip

We model the diffusion of a collection of particles in a stationary viscous fluid under the influence of gravity and thermal agitation.

\nb{State space: } We choose a bounded set $\Omega\subset \R^d$ to be the container with the fluid; the particles are represented by their concentration $c:\Omega \to [0,\infty)$ (in moles per $\un {m^3}$). The state space is therefore $\aZ:= \{c\in L^1(\Omega): c\geq0\}$. Positions in $\Omega$ are labeled $x$.

This choice represents many assumptions at the same time: the particles are indistinguishable from each other, there are many of them, we are only interested in their density, and the fluid remains stationary; and the fact that we did not include temperature as a variable implies that temperature is held constant. 

\nb{Energy: } We choose as driving functional for this system the following combination of \emph{entropy} and \emph{gravitational energy}:
\begin{equation}
\label{choice:entropy-diffusion-particles-fluid}
\aF: \aZ\to\R, \qquad \aF (c) := RT \int_\Omega c(x)\log \frac{c(x)}{c_0}\, dx
+ \rho g\int_\Omega x_3 c(x) \, dx.
\end{equation}
Here $c_0>0$ is an arbitrary reference concentration (since it is arbitrary, it should not appear in the final equations; and indeed it does not), $R \simeq 8.314 \,\un{J}\un K^{-1}\un{mol}^{-1}$ is the \emph{universal gas constant} and $T$ is the temperature; $\rho$ is the mass density contrast of the particles (the difference between the density of the particles and that of the fluid, in $\un{kg/m^3}$), and $g \simeq 9.8 \un {m/s^2}$ is the gravitational acceleration.

This choice follows from assuming that the particles have no spatial preference, that they are thermally agitated at temperature $T$, that there are many of them, and that they are independent in a certain way. How these aspects can be recognized in this choice is indeed a  non-trivial question, which is treated in detail in Chapters~\ref{ch:entropy-free-energy-stationary} and~\ref{ch:Wasserstein-dissipation}. For now we continue with the next part of the modelling.

\medskip


\nb{Processes: }Similarly to~\eqref{eq:proc-tang-conn-Stokes} we allow the state $c\in \aZ$ to change through the effects of a flux $w:\Omega \to\R^d$  by
\begin{equation}
\label{modass:c-w}
\dot  c + \div cw =0  \quad \text{in $\Omega$},\qquad w\cdot n = 0 \quad \text{on }\partial\Omega,
\end{equation}
or in weak form,
\[
\forall \varphi \in C^1_b(\Omega): \qquad
\partial_t \int_{\Omega} \varphi(x) c(t,x)\, dx - \int_\Omega c(t,x)w(t,x)\nabla \varphi(x)\, dx =0.
\]

This is again a modelling choice: for instance, the divergence form of~\eqref{modass:c-w} implies that no particles can be created or removed. If one wants to allow for chemical reactions, for instance, which create and destroy partices, then the characterization~\eqref{modass:c-w} should also contain terms that are not in divergence form. The boundary condition $w\cdot n = 0$ is the modelling assumption that no particles enter or leave through the boundary.

The set of all \emph{process vectors} is therefore the set $\{w:\Omega\to\R^d: w\cdot n = 0 \text{ on } \partial \Omega\}$. Each process vector $w$ generates a \emph{tangent vector} $\dot c$ through the relation~\eqref{modass:c-w}.


\medskip
Now that we have characterized the processes that change the state, we can choose the dissipation:
\nb{Dissipation potential: }We define the \emph{dissipation potential} on the set of process vectors~$w$ as the functional 
\begin{equation}
\label{def:dissipation-potential-diffusion}
\widetilde\Psi(c;w) := \frac\eta2 \int_\Omega c(x){|w(x)|^2}\, dx.
\end{equation}
This choice follows from assuming that the fluid remains stationary and is linearly viscous, that the particles are sufficiently symmetric, and that they are far enough from each other to prevent interference between the flow fields surrounding them. Here the constant $\eta>0$ is a \emph{friction} coefficient: it characterizes the ratio between a velocity of a particle through the fluid and the corresponding force exerted by the fluid on the particle. 
As in the case of the entropy, the origin of this choice, and the interpretation of $\eta$,  will be discussed at length in Chapter~\ref{ch:Wasserstein-dissipation}.

\begin{danger} 
As in the previous example, the dissipation potential is one half of the  \emph{actual dissipation} of free energy; see the discussion on page~\pageref{dissipation-dissipation-potential}.
\end{danger}

\medskip

\nb{Equations: }
To derive the equations, we postulate that at a given time $t$ and at state $c$, the process vector $w$ is characterized by the minimization problem
\begin{equation}
\label{min:ex-diffusion}
\min_{w,\dot c}\  \Bigl\{ \widetilde \Psi(c;w) + \langle \aF'(c), \dot c\rangle:\ \dot c \text{ and } w \text{ connected by }\eqref{modass:c-w}\Bigr\},
\end{equation}
as we also did in the previous example.

We find the equation for $c$ from the stationarity condition: writing the expression in braces as
\[
\widetilde \Psi(c;w) + \langle \aF'(c),-\div w\rangle = \widetilde \Psi(c;w) + RT \int_\Omega \nabla \Bigl(\log \frac c{c_0}+1 \Bigr) \cdot w,
\]
the stationarity condition is seen to be 
\[
\forall \tilde w: \qquad 0 = \int_\Omega c {w\tilde w} \, dx + RT\int_\Omega \tilde w{\nabla c} \, dx, 
\]
which implies $cw = -RT/\eta \, \nabla c$ and leads to the equation
\begin{subequations}
\label{eq:diffusion}
\begin{alignat}2
\dot  c &= \frac{RT}\eta \Delta c &\qquad & \text{in }\Omega,\\
\partial_n c &= 0 && \text{on }\partial\Omega.
\end{alignat}
\end{subequations}

\section*{Discussion}

\paragraph{Comparison to classical derivations.} A common way of deriving the diffusion equation~\eqref{eq:diffusion} is by combining the conservation law~\eqref{modass:c-w} with a \emph{constitutive law} $cw = -D\nabla c$, known under the name of \emph{Fick's law}. This leads to the same equation, but with $RT/\eta$ replaced by $D$. 

In this form Fick's law is beautifully simple, and indeed there is little reason to use the overpowered variational-modelling route to model this simple system. However, for more complex systems it is often not clear how to generalize Fick's law, for instance when species interfere with each other's mobility, as in the case of cross-diffusion, or when species diffuse on lower-dimensional structures such as surfaces or linear structures. In addition, for some systems the species concentration has an energetic impact that is not easy to quantify in a simple constitutive law, such as the case when in vesicle membranes the species changes the rigidity of those membranes. 

\red{\paragraph{What happened to Fick's law?} Considering that the variational-modelling route of the previous section leads to the same equation as Fick's law, it is natural to ask how Fick's law is hiding in the variational-modelling choices. We comment on this in Section~\ref{sec:comments-on-Wasserstein-dissipation}.
}

\paragraph{The motivation is not yet complete.} As indicated, we have not yet explained which modelling choices are represented by the chosen free energy $\aF$ and the dissipation potential~$\Psi$. After we provide this motivation in Chapters~\ref{ch:entropy-free-energy-stationary} and~\ref{ch:Wasserstein-dissipation}, this derivation, and those that follow, will be a self-contained modelling route. 

\paragraph{The derivation is formal.} The description of all allowable changes $\dot c$ is formal; we did not specify conditions on $\dot c$ and $w$ in~\eqref{modass:c-w}. In many other situations the situation is even worse, since $\aZ$ is often a set without a differentiable structure, and therefore a `smooth curve' might not even exist. It is to handle this type of difficulty that the theory of metric-space gradient flows has been developed~\cite{AmbrosioGigliSavare08}.

In fact it is very common that modelling derivations are not rigorous, and often would even be very hard to make rigorous. The difficulty usually lies in the assumed regularity of the components of the model: if the solutions are regular, then the arguments are rigorous. Of course there are several well-known examples in which this approach fails to produce complete models; for instance, in the case of weak solutions of scalar conservation laws, uniqueness of such solutions only holds under additional entropy conditions. One should think of these model derivations as thought experiments, whose outcome should later still be checked for mathematical consistency.

\paragraph{The process space has redundant information.}
The field $w$ contains more information than $\dot c$, since adding a divergence-free vector field to $cw$ does not change $\dot c$. This is not a problem; the minimization~\eqref{min:ex-diffusion} deals with this redundancy in a natural way.  In the context of Wasserstein gradient flows this phenomenon is well known, where it leads to the characterization of the Wasserstein tangent in terms of gradients (see e.g.~\cite{Otto01} or \cite[Th.~9.3.2]{AmbrosioGigliSavare08}).

Such redundancies are very common, and they arise from the fact that it is often convenient to characterize the state space and the space of processes in different ways. We will see more examples of this below. 

\bigskip

\red{Maybe add comparison of diffusion coefficients with the numbers generated here?}

\chapter{Gradient-flow facts}
\label{ch:GF}

\section{Forces and velocities, tangents and cotangents}
\label{sec:tangents}

Variational modelling is made clearer by adopting the geometric terminology of tangents and cotangents, even though we do this in a non-rigorous manner. 

\medskip

Let $\aZ$ be a subset of a linear space. The space of tangent vectors at a point $z\in \aZ$, noted $T_{z}\aZ$, is the set of all derivatives $\dot {\boldsymbol z}(0)$ of smooth curves $\boldsymbol z:(-1,1)\to\aZ$ such that $\boldsymbol z(0)=z$. The space of cotangent vectors, noted $T^*_{z}\aZ$, is the set of all linear functionals on $T_{z}\aZ$. The set of all pairs $(z,\tau)$ where $z\in \aZ$ and $\tau\in T_z\aZ$ is called the \emph{tangent bundle} and is denoted $T\aZ$, and similarly the set of all points and cotangent vectors is written $T^*\aZ$. If $\tau\in T_{z}\aZ$ and $\xi\in T^*_z\aZ$, then we write $\dual \xi \tau=\dual\tau\xi\in \R$ for the value of $\xi$ applied to $\tau$. 

In mechanical terminology, the space of tangents is the space of \emph{generalized velocities} or \emph{infinitesimal displacements}. The cotangent space can be identified with \emph{generalized forces}, if one gives $T\aZ$ and $T^*\aZ$ dimensions such that $\dual \xi \tau$ has dimensions of energy per unit time (power); then the duality is sometimes called a \emph{work couple}. The classical example is when $\aZ=\R$ consists of distances, measured in $\un m$, and $T^*\aZ$ has dimensions of force $[\un N]$; then $\tau$ is a velocity $[\un {m/s}]$ and $\dual \xi\tau$ has dimensions $\un{Nm/s}$. But many other combinations are possible, such as when $z$ is an angle, $\tau$ is a rotational velocity $[1/\un s]$, and $\xi$ is a torque $[\un{Nm}]$.

\begin{danger}
This division into tangents and cotangents is useful from many points of view. Maybe the most important is that correct use of this structure guarantees coordinate invariance, as is the case in Riemannian geometry.
\end{danger}

\medskip

If $E:\aZ\to\R$ has dimensions of energy, then the derivative of $E$ with respect to $z$ is a cotangent vector with the dimensions of a generalized force; for any $\tau\in T_z\aZ$, $\dual{E'(z)}\tau $ has dimensions of power, and if $z:(-1,1)\to\aZ$, then $\partial_t E(z(t))|_{t=0} = \dual{E'(z(0))}{\dot z(0)}$. This derivative is mathematically the Fr\'echet derivative. 

\medskip

The operators $\aK$ and $\aG$ that we mentioned in Section~\ref{sec:GF-intro} are examples of \emph{duality maps}, that map tangents to  cotangents or vice versa: for each $z\in\aZ$, $\aG(z)$ is an operator that maps $T_z\aZ$ to $T_z^*\aZ$, and similarly $\aK(z)$ maps $T_z^*\aZ$ to $T_z\aZ$. (Heed the notation: for $z\in\aZ$, $\aG(z)$ is an operator that takes a tangent $\tau\in T_z\aZ$ and maps it to a cotangent vector written as $\aG(z)\tau$.) Because of this property, they give rise to bilinear forms on $T_z\aZ$ and $T^*_z\aZ$, respectively:  
\begin{align*}
(\tau_1,\tau_2)_{\aG,z} &:= \dual {\tau_1}{\aG(z)\tau_2}\\
(\xi_1,\xi_2)_{\aK,z} &:= \dual {\xi_1}{\aK(z)\xi_2}.
\end{align*}
These bilinear forms are symmetric if $\aG$ and $\aK$ are {symmetric}, in the sense that $\dual {\tau_1}{\aG(z)\tau_2}= \dual {\tau_2}{\aG(z)\tau_1}$.

Such duality maps are necessary in equations such as~\eqref{eq:GF-intro}, when one wishes to equate~$\dot z$ with the derivative of an energy, since these two objects live in different spaces. The duality maps implement the mapping from one to the other.

\section{Gradient flows}
\label{sec:GF-formal}

We now revisit the informal definition of a gradient flow that we gave in Section~\ref{sec:GF-intro}. The starting point is to think of gradient flows as a set of equations of the form 
\begin{subequations}
\label{eq:GF-formal}
\begin{align}
\label{eq:GF-formal-K}
\dot z &= -\aK(z) \aF'(z) \qquad \text{or}\\
\aG(z)\dot z &= - \aF'(z).
\label{eq:GF-formal-G}
\end{align}
\end{subequations}
As we described earlier, the interpretation of $\aG$ and $\aK$ is as a characterization of dissipation, through the energy-dissipation identity
\begin{equation}
\label{id:energy-dissipation}
\frac d{dt} \aF(z(t)) = \dual{\aF'}{\dot z} = - \dual{\aF'}{\aK(z)\aF'} = -\dual{\dot z}{\aG(z) \dot z}.
\end{equation}
However, we  need to allow for several generalizations:
\begin{enumerate}
\item \label{need-for-generalization-degenerate} The duality operators $\aK$ and $\aG$ can be degenerate or infinite; for instance, certain changes of state, i.e.\ certain tangent vectors, might not lead to dissipation of energy, implying that $\aG$ vanishes in those directions; similarly, certain directions might be inadmissible, implying that `$\aG=\infty$' in those directions. By duality, these two possibilities translate into $\aK$ being `infinity' in the first case and `zero' in the second, in the corresponding dual directions.
\item \label{need-for-generalization-nonlinear} The duality operators might not be linear in their arguments; a typical example of this is a rate-independent system, in which case $\aG(z)(\lambda s)  = \aG(z) s$ for all $\lambda>0$. 
\item \label{need-for-generalization-process} As we saw in Sections~\ref{sec:slow-viscous-flow} and~\ref{sec:diffusion-particles-fluid}, it is natural to model changes of state through auxiliary variables such as flow fields. This suggests to define dissipation in terms of a \emph{process space}, which can be different than the tangent space.
\end{enumerate}

In the next section we discuss how to deal with point~\ref{need-for-generalization-process}. To deal with~\ref{need-for-generalization-degenerate} and~\ref{need-for-generalization-nonlinear}, we redefine the concept of a gradient flow in terms of a pair of dissipation functionals $\Psi:T\aZ\to\R$ and $\Psi^*:T^*\aZ\to\R$, which are assumed to be dual in the sense that for all $z\in \aZ$, 
\begin{align*}
\Psi^*(z,\xi) &= \sup_{s\in T_z\aZ} \bigl[ \dual{\xi} s - \Psi(z,s)\bigr]\\
\Psi(z,s) &= \sup_{\xi\in T^*_z\aZ} \bigl[ \dual{\xi}s - \Psi^*(z,\xi)\bigr].
\end{align*}
Because of this duality, both $\Psi$ and $\Psi^*$ are convex in their second argument, and we have the inequality
\[
\Psi(z,s) +\Psi^*(z,\xi) \geq \dual \xi s\qquad \text{for all }s\in T_z\aZ\text { and }\xi\in T_z^*\aZ,
\]
with equality if and only if $s\in \partial_\xi \Psi^*(z,\xi) \Longleftrightarrow \xi\in \partial_s \Psi(z,s)$.

Given (linear) duality operators $\aG$ and $\aK$, we define corresponding (quadratic) potentials $\Psi$ and $\Psi^*$ by $\Psi(z,s) := \frac12 \dual{s}{\aG(z)s}$ and $\Psi^*(z,\xi) := \frac12 \dual{\xi}{\aK(z)\xi}$. The fact that these are a dual pair can be seen as follows:
\begin{align}
\Psi^*(z,\xi) &= \sup_{s\in T_z\aZ} \bigl[ \dual\xi s - \Psi(z,s)\bigr]\notag\\
&= \sup_{s\in T_z\aZ} \bigl[ \dual \xi s - \tfrac12 \dual{s}{\aG(z)s}\bigr]\notag\\
&= \sup_{\sigma\in T^*_z\aZ} \bigl[ \dual \xi{\aK(z)\sigma} - \tfrac12 \dual{\aK(z)\sigma} \sigma \bigr] \qquad \text{by substituting }s=\aK(z) \sigma\notag\\
&= \tfrac12 \dual{\aK(z)\xi}\xi +  \sup_{\sigma\in T^*_z\aZ} -\tfrac12 \dual{\aK(z) (\xi-\sigma)}{\xi-\sigma}
= \tfrac12 \dual{\aK(z)\xi} \xi.
\label{id:duality-GK-PsiPsistar}
\end{align}
For this case the identity conditions $s\in \partial_\xi\Psi^*(x,\xi) \Longleftrightarrow \xi\in \partial_s \Psi(z,s)$ simplify to $s = \aK(z)\xi \Longleftrightarrow \xi=\aG(z)s$.

\bigskip

We now give a defintion of a generalized gradient flow that also applies  to the case of non-quadratic potentials $\Psi$ and $\Psi^*$.

\begin{definition}[Global formulation]
\label{def:GF-global}
A function $z:[0,T]\to\aZ$ is a solution of the \emph{generalized gradient flow} associated with $(\aZ,\aF,\Psi)$ if it satisfies the inequality
\begin{equation}
\label{ineq:def-GGF-integral}
\aF(z(T)) - \aF(z(0)) + \int_0^T \Bigl[\Psi\bigl(z(t),\dot z(t)\bigr) + \Psi^*\bigl(z(t),-\aF'(z(t))\bigr)\Bigr]\, dt \leq 0.
\end{equation}
\end{definition}

This definition extends the previous ones:
\begin{lemma}
Assume that $\aZ$ is a Hilbert space. 
Let $\aF:\aZ\to\R$ be smooth, and assume that for each $z\in\aZ$ the duality operators $\aG(z)$ and $\aK(z)$ are invertible with $\aG(z)= \aK(z)^{-1}$. Define $\Psi(z,s) := \frac12 \dual{s}{\aG(z)s}$ and $\Psi^*(z,\xi) := \frac12 \dual{\xi}{\aK(z)\xi}$. Let $z:[0,T]\to\aZ$ be smooth. 
Then the following four properties are equivalent:
\begin{enumerate}
\item $\dot z = -\aK(z)\aF'(z)$ for all $t\in[0,T]$ (i.e., $z$ satisfies~\eqref{eq:GF-formal-K});
\item $\aG(z)\dot z = -\aF'(z)$ for all $t\in[0,T]$ (i.e., $z$ satisfies~\eqref{eq:GF-formal-G});
\item $z$ satisfies~\eqref{ineq:def-GGF-integral};
\item \label{lemma:equivalent-GF-forms:partRayleigh} At each time $t$, and for given $z(t)$, $\dot z(t)$ is given by the property that it minimizes $\Psi(z(t),\dot z) + \dual{\aF'(z(t))}{\dot z}$.
\end{enumerate}
\end{lemma}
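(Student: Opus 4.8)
The plan is to prove the equivalence by establishing the cycle $(1)\Leftrightarrow(2)$, $(2)\Leftrightarrow(4)$, and $(2)\Leftrightarrow(3)$; since each is an ``iff'', this suffices. The equivalence $(1)\Leftrightarrow(2)$ is immediate: apply $\aK(z)$ to both sides of~\eqref{eq:GF-formal-G} and use $\aK(z)=\aG(z)^{-1}$, and conversely apply $\aG(z)$ to~\eqref{eq:GF-formal-K}. No smoothness in $t$ is needed here, only the pointwise invertibility hypothesis.

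For $(2)\Leftrightarrow(4)$, I would fix $t$ and study the map $s\mapsto \Psi(z(t),s) + \dual{\aF'(z(t))}{s} = \tfrac12\dual{s}{\aG(z(t))s} + \dual{\aF'(z(t))}{s}$ on the Hilbert space $\aZ$. Since $\aG(z(t))$ is invertible and symmetric, and (being a duality map arising as $\aK(z(t))^{-1}$ with $\aK$ non-negative) positive definite, this functional is strictly convex and coercive, hence has a unique minimizer characterized by the vanishing of its derivative: $\aG(z(t))s + \aF'(z(t)) = 0$, i.e.\ $\aG(z(t))s = -\aF'(z(t))$. Thus the minimizer is exactly the $\dot z(t)$ of~\eqref{eq:GF-formal-G}, and conversely any $\dot z$ satisfying~\eqref{eq:GF-formal-G} is this unique minimizer. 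One small point to be careful about: the paper only states $\aG,\aK$ are non-negative and symmetric, so I would note that invertibility of a non-negative symmetric operator forces strict positivity, which is what gives strict convexity.

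For $(2)\Leftrightarrow(3)$, the key is the Fenchel--Young inequality recorded just before the definition: for all $z$, $s$, $\xi$ one has $\Psi(z,s)+\Psi^*(z,\xi)\ge \dual{\xi}{s}$, with equality iff $\xi\in\partial_s\Psi(z,s)$, which in the quadratic case means $\xi=\aG(z)s$. Apply this with $s=\dot z(t)$ and $\xi=-\aF'(z(t))$: the integrand in~\eqref{ineq:def-GGF-integral} is $\Psi(z(t),\dot z(t))+\Psi^*(z(t),-\aF'(z(t))) \ge -\dual{\aF'(z(t))}{\dot z(t)} = -\tfrac{d}{dt}\aF(z(t))$, using the chain rule (here the smoothness of $z$ in $t$ and of $\aF$ is used). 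Integrating over $[0,T]$ gives
\[
\int_0^T\Bigl[\Psi(z(t),\dot z(t))+\Psi^*(z(t),-\aF'(z(t)))\Bigr]\,dt \ \ge\ \aF(z(0))-\aF(z(T)),
\]
so~\eqref{ineq:def-GGF-integral} is \emph{always} $\ge 0$, and it can hold (i.e.\ equal $0$) only if the integrand equals $-\tfrac{d}{dt}\aF(z(t))$ for a.e.\ $t$, i.e.\ equality holds in Fenchel--Young for a.e.\ $t$, i.e.\ $-\aF'(z(t))=\aG(z(t))\dot z(t)$ for a.e.\ $t$; by continuity this holds for all $t$, giving~\eqref{eq:GF-formal-G}. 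Conversely, if~\eqref{eq:GF-formal-G} holds then equality holds in Fenchel--Young at every $t$, the integrand equals $-\tfrac{d}{dt}\aF(z(t))$, and integrating gives equality in~\eqref{ineq:def-GGF-integral}, in particular ``$\le 0$''.

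The main obstacle is not any single hard estimate but rather the bookkeeping around non-negativity versus strict positivity (needed for uniqueness of the minimizer in step~$(4)$) and the measure-zero/continuity argument needed to upgrade ``equality a.e.'' to ``the pointwise ODE holds everywhere'' in step~$(3)$; both are routine but should be stated carefully. The conceptual heart is simply the Fenchel--Young (de Giorgi) trick: the energy-dissipation \emph{inequality} is automatically an equality precisely on solutions of the gradient-flow equation.
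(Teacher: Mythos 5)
Your proposal is correct and follows essentially the same route as the paper's own proof: the immediate equivalence of (1) and (2) via invertibility, the Fenchel--Young (duality) inequality applied along the curve to show that \eqref{ineq:def-GGF-integral} can only hold with equality and is equivalent to the pointwise equation, and the identification of \eqref{eq:GF-formal-G} as the stationarity condition of the strictly convex minimization in part~\ref{lemma:equivalent-GF-forms:partRayleigh}. Your added care about strict positivity of $\aG$ (for uniqueness of the minimizer) and the a.e.-to-everywhere upgrade is sound and slightly more explicit than the paper's treatment.
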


\begin{danger}
Note that the conditions of this lemma exclude all but the most trivial of examples, since in nearly all cases $\aF$ will not be smooth with respect to the topology generated by $\aG$; this is exactly the origin of the parabolic behaviour that infinite-dimensional gradient flows typically show. Therefore either $\aF$ will not be smooth in the chosen topology, or $\aG$ will not be invertible with respect to that topology.

The reason for this lemma therefore is not to apply it directly, but to provide a case in which the usually formal connections can be made rigorous.
\end{danger}

\begin{proof}
The equivalence $\eqref{eq:GF-formal-K}\Longleftrightarrow \eqref{eq:GF-formal-G}$ is immediate. To show $\eqref{eq:GF-formal}\Longleftrightarrow\eqref{ineq:def-GGF-integral}$, note that for \emph{any} smooth curve $z:[0,T]\to\aZ$ we have 
\[
\frac d{dt} \aF(z(t)) = \dual{\aF'(z(t))}{\dot z(t)}
\stackrel{(*)}\geq -\Psi\bigl(z(t),\dot z(t)\bigr) - \Psi^*\bigl(z(t),-\aF'(z(t))\bigr).
\]
Therefore inequality~\eqref{ineq:def-GGF-integral} holds with the \emph{opposite} inequality for any curve $z$; imposing the inequality~\eqref{ineq:def-GGF-integral} is equivalent to imposing equality, which in turn is equivalent to imposing equality in $(*)$ above; this in turn is equivalent to $\dot z = \aK(z)(-\aF'(z))$. This proves $\eqref{eq:GF-formal}\Longleftrightarrow\eqref{ineq:def-GGF-integral}$.

Finally, the equation $\aG(z)\dot z = -\aF'(z)$ is the stationarity condition for the minimization problem $\min_s \Psi(z,s) + \dual{\aF'(z)}{s}$, and therefore part~\ref{lemma:equivalent-GF-forms:partRayleigh} is equivalent with~\eqref{eq:GF-formal-G}.
\end{proof}

Part~\ref{lemma:equivalent-GF-forms:partRayleigh} is so important that we emphasize it and give it a name:
\begin{example}
\begin{definition}[Local formulation]
\label{def:GF-local}
A function $z:[0,T]\to\aZ$ is a \emph{local} solution at time~$t$ of the {generalized gradient flow} associated with $(\aZ,\aF,\Psi)$ if $\dot z(t)$ has the property
\begin{equation}
\label{minpb:def-dotz}
\dot z(t) \text{ solves }\quad 
\min_s \Psi(z(t),s) + \dual{\aF'(z(t))}{s}.
\end{equation}
\end{definition}
\end{example}

\section{Tangents and processes}
\label{sec:tangents-and-processes}

We now make the second generalization of the gradient-flow concept, in order to deal with point~\ref{need-for-generalization-process} on page~\pageref{need-for-generalization-process}.

A simplistic reformulation of the variational modelling method is `(a) choose the state space and the energy on this space, and (b) choose how the state can change, and how much energy that costs'. However, the requirements on the state space are different than the requirements on the changes of state: the state space needs to be rich enough to define the energy $\aF$ on this space, and the description of the potential changes of state needs to  be rich enough to define the dissipation.

In practice, from a modelling point of view, the state evolves as the result of physical \emph{processes}, which need not be related to the states themselves in a one-to-one manner. It therefore makes sense to define a separate \emph{space of processes} $P_z\aZ$. For instance,
\begin{itemize}
\item The state space could be that of the volume $z=V>0 $ of fluid in a container, and this volume might change because of an inflow $f_i>0$ and an outflow $f_o>0$. Then the process space is
\[
P_V\aZ = \{(f_o,f_i)\in\R^2\},
\]
and the tangent $\dot V$ is given in terms of $(f_o,f_i)$ by $\dot V = f_i-f_o$.
\item In the example of diffusion of solutes (Section~\ref{sec:diffusion-particles-fluid}), $z=c$, the concentration field, and  the tangent space is the space of derivatives $\dot c$ of smooth curves, which one can formally identify with the subset of $L^1(\Omega)$ with zero integral; the process space is
\[
P_c\aZ = \{ w:\Omega\to\R^3: w\cdot n =0 \text{ on }\partial\Omega\}.
\] 
Given $w\in P_c\aZ$, the corresponding tangent is $\dot c = -\div cw$. 
\end{itemize}

\medskip

The minimization problem~\eqref{minpb:def-dotz} provides a natural way to formulate the dissipation in terms of the process space. Assume that we have chosen a state space $\aZ$, a functional $\aF$, a process space $P_z\aZ$, and a dissipation potential $\Psi(z,w)$ defined on pairs $(z,w)$ of $z\in \aZ$ and $w\in P_z\aZ$. Assume that for each $z\in \aZ$ an operator $\aP(z):P_z\aZ\to T_z\aZ$ is given that maps the process space to the tangent space, such that for each $z\in\aZ$ and $w\in P_z\aZ$, $\aP(z)w$ is the corresponding tangent vector. (In the examples above, for the container volume $\aP(V)(f_o,f_i) = f_i-f_o$, and for the solute diffusion $\aP(c)w = -\div cw$.) We can then define a dissipation potential $\widetilde \Psi$ on the tangent space as 
\begin{equation}
\label{def:Psitilde}
\widetilde \Psi(z,s) := \inf_w  \{\Psi(z,w): s = \aP(z)w\}.
\end{equation}
This dissipation potential $\widetilde \Psi(z,s)$ corresponds to selecting the best process $w$ that gives rise to the tangent vector $s$: that is, `best' in the sense of having the smallest dissipation potential.

Both the global Definition~\ref{def:GF-global} and the local Definition~\ref{def:GF-local} can now be applied to $\widetilde\Psi$. Note that the local definition~\eqref{minpb:def-dotz} can be rewritten as 
\begin{align*}
&\dot z \in \argmin_{s\in T_z\aZ} \widetilde\Psi(z,s) + \dual{\aF'(z)}{s}\\
\Longleftrightarrow \quad& \dot z \in \argmin_{s\in T_z\aZ} \min_{w \in P^{-1}s}\Psi(z,w) + \dual{\aF'(z)}{Pw}\\
\Longleftrightarrow \quad& w \in \argmin_{w \in P_z\aZ}\Psi(z,w) + \dual{\aF'(z)}{Pw}
  \qquad \text{and } \dot z = \aP(z)w.
\end{align*}
We summarize this discussion in the form that we use it throughout these notes:
\begin{example}
If $P_z\aZ$ is the process space at $z\in \aZ$, with dissipation $\Psi(z,w)$, and $\aP:P\aZ\to T\aZ$ maps processes onto tangents, then we can find the evolution equations by solving the minimization problem
\[
\min_{w\in P_z\aZ} \Psi(z,w) + \dual{\aF'(z)}{\aP(z)w}.
\]
Then $\dot z$ is given by the projected minimizer $\aP(z)w$.
\end{example}

\section{Wasserstein distance and Wasserstein gradient flows}

The example of Section~\ref{sec:diffusion-particles-fluid} is also known as a \emph{Wasserstein gradient flow}, and this class of gradient flows is central in these notes. 
A good reference on the Wasserstein metric and Wasserstein gradient flows is~\cite{AmbrosioGigliSavare08}.

\subsection{Wasserstein gradient flows}

Wasserstein gradient flows are gradient flows in the state space $\aZ = \M_2$ of non-negative finite-mass measures with finite second moments, i.e. the space
\[
\M_2(\R^d) := \left\{ \rho\in \M(\R^d): \rho\geq 0 \text{ and } \int_{\R^d} (1+ |x|^2) \,\rho(dx) <\infty\right\}.
\]
See Appendix~\ref{app:measure-theory} for some background on measure-theoretical concepts and notation. 

We define the gradient flow by defining the operator $\aK$. For $\rho\in \M_2(\R^d)$ and a function $\xi:\R^d\to\R$ with weak derivative $\nabla \xi\in L^2(\rho;\R^d)$, let $\aK(\rho)\xi$ be the distribution
\[
\aK(\rho) \xi :=  -\div \rho\nabla \xi .
\]
We interpret this as an operator on the cotangent space of $\aZ$ by applying it not to the Fr\'echet derivative $\aF'$ but to the variational derivative $\D \aF$, defined by
\[
\int_{\R^d} \D\aF(\rho)(x)f(x)\, dx = \dual{\aF'(\rho)}{f} 
  = \lim_{h\to0} \frac1h \bigl[\aF(\rho+hf)-\aF(\rho)\bigr] \qquad\text{for all }f.
\]
The Wasserstein gradient flow of a functional $\aF$ therefore is the evolution equation
\[
\dot \rho = \div \rho\nabla \D\aF(\rho).
\]

\begin{danger}
Many well-known diffusive partial differential equations have the structure of a gradient flow with respect to the Wasserstein metric or a related metric. These are a few examples of energies and corresponding equations:
\begin{alignat*}2
&\aF(\rho) = \int [\rho(x)\log \rho(x) + \rho(x)V(x)]\, dx &\quad& \partial_t \rho = \Delta \rho  + \div \rho\nabla V\\
&\aF(\rho) = \int [U(\rho(x)) + \rho(x)V(x)]\, dx  + \frac12 \iint \rho(x)\rho(y) W(x-y)\, dxdy &\qquad& \partial_t \rho = \div \rho\nabla \bigl[U'(\rho)+ V + W*\rho\bigr]\\
&\aF(\rho) = \int |\nabla \rho|^2(x)\, dx   &\qquad& \partial_t \rho = -\div \rho\nabla \Delta \rho\\
\end{alignat*}
\end{danger}

\subsection{The Wasserstein metric tensor}
\label{sec:WassersteinMetricTensor}

As we discussed in Sections~\ref{sec:GF-intro} and~\ref{sec:GF-formal}, the operator $\aK$ gives rise to an inner product, which is actually a weighted Sobolev inner product,
\[
(\xi_1,\xi_2)_{1,\rho} := \int_{\R^d} \xi_1 \aK(\rho)\xi_2 
= \int_{\R^d} \nabla \xi_1\nabla \xi_2 \,d\rho,
\]
with corresponding norm
\begin{equation}
\label{def:dualWassersteinNorm}
\|\xi\|_{1,\rho}^2  = \int_{\R^d} |\nabla\xi|^2 \, d\rho.
\end{equation}
In Wasserstein terminology this is called the \emph{dual} norm. 
The corresponding \emph{primal} Wasserstein norm $\|\cdot \|_{-1,\rho}$ is defined for any distribution $s\in\cD'(\R^d)$,
\begin{equation}
\label{def:locWM}
\|s\|_{-1,\rho}^2 := \inf_v \left\{\int_{\R^d}|v|^2 \,d\rho:
v\in L^2(\rho;\R^d), \ s +\div \rho v =0 \text{ in } \cD'(\R^d)\right\}.
\end{equation}
If $s$ has no representation as $\div \rho v$ for any $v\in L^2(\rho)$, then the norm is given the value $+\infty$.

Note how definition~\eqref{def:locWM} contains the same expression as the dissipation potential $\widetilde\Psi$ in~\eqref{def:dissipation-potential-diffusion}; in fact~\eqref{def:locWM} is an example of~\eqref{def:Psitilde}.

The infimum in~\eqref{def:locWM} is achieved, and the corresponding vector field $v$  is an element of the set~\cite[Th.~8.3.1]{AmbrosioGigliSavare08}
\begin{equation}
\label{set:tangents}
\overline{\{\nabla \varphi: \varphi\in C_c^\infty(\R^d)\}}^{L^2(\rho)}.
\end{equation}
The distribution $s$ and the corresponding velocity field $v$ are two different ways of describing the same tangent vector. Introducing the notation
\[
(s,v)\in \Tan{\rho} \qquad\Longleftrightarrow\qquad  s+\div \rho v =0 \quad\text{and}\quad
v\in \overline{\{\nabla \varphi: \varphi\in C_c^\infty(\R^d)\}}^{L^2(\rho)},
\]
this optimal vector field allows us to define a local metric tensor on the tangent space, the corresponding local inner product:
\begin{equation}
\label{def:Wip}
(s_1,s_2)_{-1,\rho} := \int_{\R^d} v_1\cdot v_2 \,d\rho 
\qquad\text{where } (s_i,v_i)\in \Tan{\rho}, 
\end{equation}
such that the norm~\eqref{def:locWM} satisfies $\|s\|_{-1,\rho}^2 = (s,s)_{-1,\rho}$. 

\medskip

A consequence of the construction above is that $\aK\D \aF$ can be viewed as the \emph{Wasserstein gradient} of $\aF$, in the sense that along a curve $t\mapsto \rho_t$,
\begin{equation}
\label{eq:Wasserstein-chain-rule}
\partial_t \aF(\rho_t) = \dual{\aF'(\rho_t)}{\dot \rho_t} = \bigl(\aK\D \aF(\rho_t),\dot\rho_t\bigr)_{-1,\rho_t}.
\end{equation}

\subsection{Formulation in terms of a process space}

The definitions above can be simply summarized in terms of the process-space concept introduced above, as follows:
\begin{align*}
&\text{State space:} & \aZ &= \M_2(\R^d)\\
&\text{Process space}: \quad & P_\rho\aZ &= \overline{\{\nabla \varphi: \varphi\in C_c^\infty(\R^d)\}}^{L^2(\rho)}\\[\jot]
&\text{Mapping $\aP$}: & \aP(\rho) w &= -\div \rho w\\[\jot]
&\text{Tangent space}: \quad& T_\rho\aZ &= \{\aP(\rho)w: w\in P_\rho\aZ\}=\{-\div \rho w: w\in P_\rho\aZ\} \\
&\text{Cotangent space:} & T^*_\rho\aZ &= \overline{C_c^\infty(\R^d)}^{\|\cdot\|_{1,\rho}}.
\end{align*}
The Wasserstein metric tensor is indeed a good example of how the process space gives a representation of the tangent space, through the operator $\aP$, in terms of which the dissipation is more directly characterized.

\subsection{The Wasserstein distance}
\label{sec:WassersteinDistance}

Based on the norm $\|\cdot\|_{-1,\rho}$, the \emph{Wasserstein distance} between two measures of equal mass can then be defined as the infimum of the Wasserstein norm integrated along curves~\cite{BenamouBrenier00},
\begin{equation}
\label{def:WassersteinDistance}
W_2(\rho_0,\rho_1)^2 = \inf\left\{
\int_0^1 \|\dot\rho_t \|^2_{-1,\rho_t}\, dt: \rho:[0,1]\to\M_2(\R^d), \ 
\rho_t\bigr|_{t=0,1} = \rho_{0,1}\right\}.
\end{equation}

The distance function $W_2$ is a complete separable metric on each subspace of $\M_2(\R^d)$ of functions of a given mass~\cite[Prop.~7.1.5]{AmbrosioGigliSavare08}; the distance between measures of different mass is not defined, since the condition $\dot \rho_t + \div \rho_t w_t =0$ implies that the total mass of $\rho_t$ is independent of $t$. Convergence in Wasserstein metric is equivalent to the combination of (a) narrow convergence of measures (i.e. in duality with continuous and bounded functions) and (b) convergence of the second moments~\cite[Prop.~7.1.5]{AmbrosioGigliSavare08}.
%
%

\subsection{The Wasserstein distance for empirical measures}

For the special case of empirical measures the Wasserstein norms and distance take a particularly simple form. This case also explains some of the relevance of this metric for modelling. 

Let $x_i:[0,1]\to\R^d$, $i=1,\dots n$ be $n$ time-courses of positions in $\R^d$, and define the time-parametrized empirical measure $\rho_t := \frac1n \sum_{i=1}^n \delta_{x_i(t)}$. Then 
the local Wasserstein norm is 
\begin{equation}
\label{eq:Wass-norm-atomic}
\|\dot \rho_t\|_{-1,\rho_t} ^2 = \frac1n \sum_{i=1}^n \dot x_i(t)^2.
\end{equation}
If $n=1$, for instance, then the Wasserstein norm is the same as the Euclidean norm of the velocity $\dot x$. For general $n$, the Wasserstein norm is proportional to the Euclidean norm of the velocity vector $(\dot x_1,\dots,\dot x_n)\in \R^{nd}$. As $n$ becomes large, the Wasserstein norm can be seen as a version of the Euclidean norm of the $n$-particle velocity that remains meaningful in the limit $n\to\infty$. 

Turning to the distance $W_2$, it follows from the discussion above that for single delta functions $W_2$ reproduces the Euclidean distance:
\[
W_2(\delta_{x},\delta_{y}) = |x-y|.
\]
For multiple delta functions the situation is similar; however, the infimum in the definition translates into a combinatorial optimization:
\[
W_2\Bigl(\frac1n\sum_{i=1}^n \delta_{x_i},\frac1n\sum_{j=1}^n \delta_{y_j}\Bigr)^2
= \inf_{\sigma\in S_n} \frac1n \sum_{i=1}^n |x_i-y_{\sigma(i)}|^2.
\]
In constructing curves connecting the two empirical measures, there is freedom in the matching between initial and final points; this matching results in the minimization over the set of permuations $S_n$ of $n$ objects.

%
%
%
%
%
%
%

\section{Comments}

\paragraph{Characterization of tangent and cotangent spaces.}
Our introduction of tangent and cotangent vectors implicitly  assumed that $\aZ$ is a space with a differentiable structure, in which time derivatives $\dot z$ and state-space derivatives $E'$ can be defined in a meaningful way. In addition, our implicit definition of the cotangent space as the topological dual of the tangent space requires a topology on the tangent space, which we didn't define. In many cases of importance, including the important example of the Wasserstein metric space,  these aspects are non-trivial, and appropriate constructions and generalizations are necessary; the book by Ambrosio, Gigli and Savar\'e~\cite{AmbrosioGigliSavare08} is an example of such generalizations in the case of gradient flows. 

\red{\paragraph{Historical notes.} Was Raleigh the first to write down this quadratic dissipation minimization problem? Talk about the name Wasserstein.}

\section{Exercises}

\begin{exercise}
In Section~\ref{sec:GF-intro} we made the claim that using tangents and cotangents makes the equations automatically coordinate-independent, as a physical theory should be. 

Verify this for a gradient flow
\[
\aZ = \R^2, \quad \aF:\aZ\to\R, \quad \aG(z)\in \R^{2\times 2},
\]
for instance by applying a linear transformation of the type
\[
y := A z \qquad\text{with} \qquad A\in \R^{2\times 2} \quad \text{invertible}.
\]
How do  the expressions for $\aF$ and $\aG$ in terms of $y$ relate to those in terms of $z$? And  $\aK$?
\end{exercise}

\begin{exercise}
In Section~\ref{sec:GF-intro} the claim was made that it is equivalent to choose any one of $\aG$, $\aK$, $\Psi$, or $\Psi^*$. A more precise statement would be the following. 

For given $\aG$ and $\aK$, define the norms and bilinear forms on $T\aZ$ and $T^*\aZ$ respectively:
\begin{alignat*}2
\|s\|_{\aG,z}^2 &\isdef  \dual{\aG(z)s}s \qquad\qquad& \|\xi\|_{\aK,z}^2 &\isdef \dual{\aK(z)\xi}\xi\\
(s_1,s_2)_{\aG,z} &\isdef \dual{\aG(z)s_1}{s_2} \qquad\qquad& (\xi_1,\xi_2)_{\aK,z}^2 &\isdef \dual{\aK(z)\xi_1}{\xi_2}.
\end{alignat*}
Prove that if any of the six objects $\aG$, $\aK$, $\|\cdot\|_{\aG,z}^2$, $\|\cdot\|_{\aK,z}^2$, $(\cdot,\cdot)_{\aG,z}$, or $(\cdot,\cdot)_{\aK,z}$ is chosen, then the other five follow automatically. (For this exercise a form of non-degeneracy can be assumed).
\end{exercise}

\begin{exercise}
The norms $\|\cdot\|_{\aG,z}$ and $\|\cdot\|_{\aK,z}$ defined above satisfy a duality inequality:
\[
\forall s\in T_z\aZ, \xi\in T^*_z\aZ: \quad 
\dual s\xi \leq \|s\|_{\aG,z}\|\xi\|_{\aK,z}.
\]
Equality happens if and only if $s = \lambda \aK(z)\xi \Longleftrightarrow \xi = \lambda^{-1}\aG(z) s$ with $\lambda>0$. Prove these statements. 
\end{exercise}

\red{More exercises?}

\chapter{Entropy and free energy, in stationary situations}
\label{ch:entropy-free-energy-stationary}

In this chapter I focus on the concepts of \emph{entropy}, \emph{free energy}, and \emph{large-deviation theory}, and their interconnections. The main modelling insights that I want to establish are 

\begin{example}
\begin{itemize}
\item[M1] \emph{Entropy is best understood as the rate functional of the empirical measure of a large number of identical stochastic particles; it describes the probability distribution of that system} (Section~\ref{subsec:first-interpretation});
\item[M2] \emph{Entropy arises from the indistinguishability of the particles} (Section~\ref{subsec:degeneracy});
\item [M3] \emph{Free energy arises from the tilting of a particle system by exchanging energy with a heat bath} (Section~\ref{subsec:free-energy}).
\end{itemize}

These insights imply the following modelling guidelines:
\begin{itemize}
\item[M4] Any \emph{concentration} $c$ (in moles per $\un {m^3}$) contributes an entropy $S(c)$ to the free energy of the system. For particles without interaction or with weak interaction the entropy is $S(\rho) = -R\int c\log (c/c_0)$ (Section~\ref{sec:modelling-free-energy}).
\item[M5] The free energy of a system with energy $E$ and concentrations $c_1,\dots,c_n$ is $E - \sum_i TS(c_i)$  (Section~\ref{sec:modelling-free-energy}).
\item[M6] 
\label{M6} The equilibrium states of such a system are the global minimizers of the free energy (Section~\ref{subsec:free-energy}).
\end{itemize}
\end{example}

\section{Entropy}
\label{subsec:entropy}

\textit{What is entropy?} This question has been asked an unimaginable number of times, and has received a wide variety of answers. Here I will not try to summarize the literature, but only mention that I personally like the treatments in~\cite{LiebYngvason97,Evans01}.

%

Instead we define one version of entropy,  the  \emph{relative} entropy of two probability measures. We assume we are working in a complete metrizable separable space $\Omega$ with a $\sigma$-algebra $\Sigma$ that contains the Borel sets (see Appendix~\ref{app:measure-theory} for the concepts used here).

\begin{definition}
\label{def:H}
Let $\mu,\nu\in \M(\Omega)$, with $\mu,\nu\geq0$. 
The \emph{relative entropy} of $\mu$ with respect to $\nu$ is 
\[
\H(\mu|\nu) := \begin{cases}
\ds\int_\Omega f\log f \, \d \nu \qquad &\text{if } \mu \ll \nu \text{ and }\ds f = \frac{\d\mu}{\d \nu} \\
+\infty & \text{otherwise}.
\end{cases}
\]
\end{definition}

A very useful special case will be the relative entropy with respect to the Lebesgue measure: for $\rho\in \P(\Omega)$, 
\begin{equation}
\label{def:Ent}
\Ent(\rho) := \H(\rho|\Lebesgue^d),
\end{equation}
where $\Lebesgue^d$ is the Lebesgue measure on $\R^d$.

Before we discuss the interpretation of this object in the next section, we first mention a number of properties.

\begin{theorem}
\begin{enumerate}
\item If $\mu(\Omega) = \nu(\Omega)$, then $\H(\mu|\nu)\geq 0$, and $\H(\mu|\nu)=0$ if and only if $\mu=\nu$;
\item If $\mu(\Omega) = \nu(\Omega)$, then $2\|\mu-\nu\|_{TV}^2\leq \H(\mu|\nu)$ ({Csisz\'ar-Kullback-Pinsker inequality});
\item \label{th:propsRelEnt:3}
$\H$ is invariant under transformations of the underlying space, i.e. if $\varphi:\Omega\to\Omega$ is a  one-to-one measurable mapping, and $\varphi_\#\mu$ and $\varphi_\#\nu$ are the \iindex{push-forwards} of $\mu$ and $\nu$, then $\H(\mu|\nu) = \H(\varphi_\#\mu|\varphi_\#\nu)$.
\end{enumerate}
\end{theorem}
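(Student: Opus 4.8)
The plan is to prove the three properties in turn, each by a short reduction to standard convexity facts about the function $t\mapsto t\log t$. Throughout, write $f = \d\mu/\d\nu$ when $\mu\ll\nu$, so that $\H(\mu|\nu) = \int_\Omega \phi(f)\,\d\nu$ with $\phi(t) := t\log t$ (and $\phi(0):=0$), a convex function on $[0,\infty)$ with $\phi(1)=0$ and $\phi'(1)=1$.

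For part (1), I would first dispose of the case $\mu\not\ll\nu$: then $\H(\mu|\nu)=+\infty\ge 0$ and $\mu\neq\nu$, so the claim is vacuous. When $\mu\ll\nu$, the key tool is Jensen's inequality applied to the convex function $\phi$ and the probability measure $\nu/\nu(\Omega)$: since $\int f\,\d\nu = \mu(\Omega) = \nu(\Omega)$, the average of $f$ against $\nu/\nu(\Omega)$ is $1$, so
\[
\frac{1}{\nu(\Omega)}\int_\Omega \phi(f)\,\d\nu \ \ge\ \phi\!\left(\frac{1}{\nu(\Omega)}\int_\Omega f\,\d\nu\right) = \phi(1) = 0,
\]
giving $\H(\mu|\nu)\ge 0$. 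For the equality case, use strict convexity of $\phi$: equality in Jensen forces $f$ to be $\nu$-a.e. constant, and since $\int f\,\d\nu = \nu(\Omega)$ that constant is $1$, i.e. $\mu=\nu$; conversely $\mu=\nu$ gives $f\equiv 1$ and $\H=0$. (One should also remark that the integral $\int\phi(f)\,\d\nu$ is well-defined in $(-\infty,+\infty]$ because the negative part of $\phi$ is bounded below, so no cancellation ambiguity arises.)

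For part (2), the Csisz\'ar--Kullback--Pinsker inequality, the cleanest route is a pointwise elementary inequality. One shows that for all $t\ge 0$,
\[
\phi(t) - (t-1) = t\log t - t + 1 \ \ge\ \tfrac32\,\frac{(t-1)^2}{t+2}\quad\text{or more simply}\quad t\log t - t + 1 \ \ge\ \tfrac13(t-1)^2\ \text{for }t\in[0,1],
\]
but in fact the slickest argument partitions $\Omega$ into $A=\{f\ge 1\}$ and $A^c$, applies the two-point inequality $p\log(p/q)+(1-p)\log((1-p)/(1-q))\ge 2(p-q)^2$ to $p=\mu(A)/\nu(\Omega)$, $q=\nu(A)/\nu(\Omega)$, together with the data-processing (monotonicity) of relative entropy under the coarsening $\Omega\to\{A,A^c\}$, and notes $\|\mu-\nu\|_{TV} = \mu(A)-\nu(A)$. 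Since the first integral $\int(f-1)\,\d\nu$ vanishes by the equal-mass hypothesis, subtracting it is free and turns $\phi(f)$ into $\phi(f)-(f-1)\ge 0$, from which the quadratic lower bound integrates to $2\|\mu-\nu\|_{TV}^2$. I expect the main fiddly point here is choosing which elementary inequality to quote and getting the constant $2$ (with the TV normalization $\|\mu-\nu\|_{TV}=\sup_B|\mu(B)-\nu(B)|$) exactly right.

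For part (3), invariance under a one-to-one measurable $\varphi$ (with measurable inverse, so that push-forwards behave well), the point is that Radon--Nikodym derivatives transport correctly: if $\mu\ll\nu$ with density $f$, then $\varphi_\#\mu\ll\varphi_\#\nu$ with density $f\circ\varphi^{-1}$, which follows from the change-of-variables identity $\int_\Omega g\circ\varphi\,\d\mu = \int_\Omega g\,\d(\varphi_\#\mu)$ applied to characteristic functions to verify $\int_B f\circ\varphi^{-1}\,\d(\varphi_\#\nu) = \int_{\varphi^{-1}(B)} f\,\d\nu = \mu(\varphi^{-1}(B)) = (\varphi_\#\mu)(B)$. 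Then, taking $g=\phi(f\circ\varphi^{-1})$ in the change-of-variables formula,
\[
\H(\varphi_\#\mu\,|\,\varphi_\#\nu) = \int_\Omega \phi\!\left(\tfrac{\d\varphi_\#\mu}{\d\varphi_\#\nu}\right)\d(\varphi_\#\nu) = \int_\Omega \phi(f\circ\varphi^{-1})\,\d(\varphi_\#\nu) = \int_\Omega \phi(f\circ\varphi^{-1})\circ\varphi\,\d\nu = \int_\Omega \phi(f)\,\d\nu = \H(\mu|\nu);
\]
and if $\mu\not\ll\nu$ then also $\varphi_\#\mu\not\ll\varphi_\#\nu$ (using injectivity of $\varphi$), so both sides are $+\infty$. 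The only real obstacle across the whole theorem is part (2): parts (1) and (3) are essentially bookkeeping around Jensen and change of variables, whereas (2) genuinely needs the right elementary inequality and care with normalizations.
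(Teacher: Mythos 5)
Your proposal is correct, and parts (1) and (3) are essentially the argument the paper gives, with one cosmetic difference in (1): the paper proves non-negativity by rewriting $\H(\mu|\nu)=\int(f\log f-f+1)\,\d\nu$ (legitimate because $\int f\,\d\nu=\mu(\Omega)=\nu(\Omega)$) and invoking the pointwise inequality $s\log s-s+1\geq0$ with equality only at $s=1$, whereas you integrate first and apply Jensen to $\nu/\nu(\Omega)$. These are two faces of the same convexity fact -- the pointwise inequality is just the supporting-line bound $\phi(t)\geq\phi(1)+\phi'(1)(t-1)$ whose integral is Jensen at the mean value $1$ -- and both yield the equality case; the pointwise version has the mild advantage of not requiring you to normalize $\nu$, and your own parenthetical subtraction of $\int(f-1)\,\d\nu$ in part (2) is exactly the paper's rewriting. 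For part (3) your transport of the Radon--Nikodym derivative under $\varphi$ followed by change of variables is precisely the paper's computation, and you add the worthwhile observation that $\mu\not\ll\nu$ forces $\varphi_\#\mu\not\ll\varphi_\#\nu$ by injectivity. The one place you genuinely diverge is part (2): the paper does not prove the Csisz\'ar--Kullback--Pinsker inequality at all, it only cites the original references, so your sketch (two-point reduction via data processing plus the binary Pinsker bound, or the pointwise quadratic lower bound on $t\log t-t+1$ followed by Cauchy--Schwarz) goes beyond the paper. Both routes you name do work and give the constant $2$ under the normalization $\|\mu-\nu\|_{TV}=\sup_B|\mu(B)-\nu(B)|$; if you want to include a proof you should commit to one of them and carry the normalization through rather than listing alternatives.
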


\begin{proof}
The first part of the theorem can be understood by writing $\H$ as
\[
\H(\mu|\nu) = \int (f\log f - f + 1)\, d\nu\qquad\text{if }f = \frac{\d\mu}{\d\nu},
\]
and using the fact that $g(s) = s\log s - s+1$ is non-negative and only zero at $s=1$. For the Csisz\'ar-Kullback-Pinsker inequality we refer to~\cite{Csiszar67,Kullback67}. 

To prove the invariance under transformations, note that for all $\omega\in\Omega$,
\[
\frac{\d\varphi_\#\mu}{\d\varphi_\#\nu} (\varphi(\omega)) = \frac{\d\mu}{\d\nu}(\omega), 
\]
so that 
\begin{multline*}
\H(\mu|\nu) = \int \log\frac{\d\mu}{\d\nu}(\omega) \, \mu(d\omega)
= \int \log \frac{\d\varphi_\#\mu}{\d\varphi_\#\nu}(\varphi(\omega))\, \mu(d\omega)=\\
= \int \log \frac{\d\varphi_\#\mu}{\d\varphi_\#\nu}(\omega')\, \varphi_\#\mu(d\omega')
= \H(\varphi_\#\mu|\varphi_\#\nu).
\end{multline*}

\end{proof}

The properties in this theorem are relevant for the central role that this relative entropy plays. The non-negativity and Csisz\'ar-Kullback-Pinsker inequality show that when~$\mu$ and~$\nu$ have the same mass (e.g.\ if $\mu,\nu\in\P(\Omega)$) then $\H$ acts like a measure of distance between~$\mu$ and~$\nu$. It's not a distance function, since it is not symmetric ($\H(\mu|\nu)\not=\H(\nu|\mu)$), but via the Csisz\'ar-Kullback-Pinsker inequality it does generate the topology of total variation on the space of probability measures. 

The fact that $\H$ is invariant under transformations of space is essential for the modelling, for the following reason. Every modelling process involves a choice of coordinates, and this choice can often be made in many different ways. Nevertheless, if we believe that there is a well-defined energy that drives the evolution, then the value of this energy should not depend on which set of coordinates we have chosen to describe it in. 

Note that $\Ent$ is \emph{not} invariant under change of coordinates. This implies that this functional corresponds to a specific choice of coordinates. 

\begin{danger}
In part~\ref{th:propsRelEnt:3} of the theorem above, if $\varphi$ is a general mapping from $\Omega$ to a set $\Omega'$, then we still have an inequality: $\H(\varphi_\#\mu|\varphi_\#\nu)\leq \H(\mu|\nu)$. In fact, the discrepancy between the left- and right-hand sides of this inequality can be fully characterized by a \emph{tensorization identity}
\[
\H(\varphi_\#\mu|\varphi_\#\nu) =  \H(\mu|\nu) + \int_{\Omega'} \H(\mu_y|\nu_y) \, \varphi_\#\mu(dy).
\]
Here $\mu_y$ and $\nu_y$ are defined through the \iindex{disintegration theorem}~\cite[Th.~5.3.1]{AmbrosioGigliSavare08}.
\end{danger}

\section{Entropy as measure of degeneracy}
\label{subsec:degeneracy}

This mathematical definition of the relative entropy $\H$ above does not explain why it might appear in a model. 
One way to give an interpretation to the relative entropy $\H$ is by a counting argument, that we explain here for the case of a finite state space. It involves \emph{microstates} and \emph{macrostates}, where multiple microstates correspond to a single macrostate. The result will be that the entropy characterizes two concepts: one is the degree of degeneracy, that is the number of microstates that corresponds to a given macrostate, and the other is the probability of observing a macrostate, given a probability distribution over microstates. The two are closely connected. 

Take a finite state space~$I$ consisting of $|I|$ elements. If $\mu\in \P(I)$, then $\mu$ is described by $|I|$ numbers $\mu_i$, and the relative entropy with respect to~$\mu$ is 
\[
\H(\rho|\mu) = \sum_{i\in I} \rho_i\log\frac{\rho_i}{\mu_i}, \qquad \text{for }\rho\in \P(I).
\]

Consider $N$ particles on the lattice described by $I$, i.e. consider a mapping $x:\{1, \dots, N\}\to I$. We think of $x$ as the \emph{microstate}. Define an \emph{empirical measure} $\rho\in \P(I)$ by 
\begin{equation}
\label{def:ki}
k_i := \#\bigl\{j\in \{1,\dots,N\}: x(j) = i\bigr\} \quad\text{and}\quad
\rho_i := \frac {k_i}N, \qquad\text{for }i\in I.
\end{equation}
In going from $x$ to $\rho$ there is loss of information;  multiple mappings $x$ produce the same empirical measure $\rho$. The degree of degeneracy, the number of unique mappings $x$ that correspond to a given $\rho$, is $N!\,\bigl(\prod_{i\in I}k_i!\bigr)^{-1}$.   The fact that the particles are identical, \emph{indistinguishable}, is important here---this is required for the description in terms of integers~$k_i$. Because of this loss of information, we think of $\rho$ as the \emph{macrostate}.

We now determine the behaviour of this `degree of degeneracy' in the limit $N\to\infty$. Using Stirling's formula in the form
\[
\log n! = n\log n - n + o(n) \qquad\text{as } n\to\infty,
\]
we estimate
\begin{align*}
\log {N!} \, \bigl(\prod_{i\in I}k_i!\bigr)^{-1} &= \log N! - \sum_{i\in I} \log k_i!\\
&= N\log N - N - \sum_{i\in I} (k_i\log k_i - k_i) + o(N)\\
&= -N\sum_{i\in I} \rho_i \log \rho_i + o(N) \qquad\text{as }N\to\infty.
\end{align*}

One interpretation of the relative entropy therefore is as follows. Take for the moment $\mu_i$ to be the uniform measure, i.e. $\mu_i = |I|^{-1}$; then
\[
\H(\rho|\mu) = \sum_{i\in I} \rho_i\log{\rho_i} + \log|I|.
\]
Then
\begin{equation}
\label{char:H}
\H(\rho|\mu) = -\lim_{N\to\infty} \frac1N \log \#\text{realizations of }\rho
\ +\  \log |I|.
\end{equation}
This shows that if the number of microscopic realizations~$x$ of the macroscopic object~$\rho$ is large, then $\H(\rho|\mu)$ is small, and vice versa.\footnote{There is a problem here, though; the $\rho$ in the right-hand side can not be independent of $N$, since it is a vector with components of the form $k/N$, while the $\rho$ in the left-hand side  expression can not depend on $N$. This contradiction will be resolved when we discuss large deviations in the next section.} This is the interpretation in terms of a counting argument. 

\smallskip
We now switch to the probabilistic point of view.
If we allocate particles at random with the same, independent, probability for each microstate $x$, then the probability of obtaining each microstate is $|I|^{-N}$, and the probability of a macrostate $\rho$ satisfies
\[
\log \Prob(\rho) = \log |I|^{-N} N!\,\Bigl(\prod_{i\in I}k_i!\Bigr)^{-1}\\
=   - N \H (\rho|\mu) + o(N) \qquad\text{as }N\to \infty.
\]

We can do the same with non-equal probabilties: we place each particle at an $i\in I$ with probability $\mu_i$. Then the probability of a microstate $x$ is 
\[
\prod_{j=1}^N \mu_{x(j)},
\]
and now
the probability of a macrostate $\rho$  satisfies
\begin{align*}
\log \Prob(\rho) &= \log \Bigl(\prod_{j=1}^N \mu_{x(j)} \Bigr) N!\,\Bigl(\prod_{i\in I}k_i!\Bigr)^{-1}\\
&= \sum_{j=1}^N \log \mu_{x(j)} +  \log {N!} \, \bigl(\prod_{i\in I}k_i!\bigr)^{-1}\\
&= N\sum_{i\in I} \rho_i \log\mu_i - N \sum_{i\in I} \rho_i\log \rho_i + o(N)\\
&= -N\H(\rho|\mu) + o(N) \qquad\qquad\text{as }N\to \infty.
\end{align*}



The common element in both points of view is the degeneracy, the number of microstates that is mapped to a single macrostate. Of course, this degeneracy only arises if the particles can not be distinguished from each other. Therefore 
I like to summarize this section like this:
\begin{example}
{Entropy arises from the indistinguishability of the particles in an empirical measure}.
\end{example}
We will return to this issue in Section~\ref{subsec:first-interpretation}.

\section{Degeneracy and dynamics}
\label{sec:degeneracy-dynamics}
The driving functional of Section~\ref{sec:diffusion-particles-fluid}, $\aF: \aZ\to\R$, can be written as
\begin{equation}
\aF:=RT\Ent+\mathcal{E}_G,\qquad 
\mathcal{E}_G= \rho g\int_\Omega x_3 c(x) \, dx,\label{eq:Ent_Pot-Energy}
\end{equation}
where $\Ent(\rho) = \int\rho\log\rho $ as in~\eqref{def:Ent}.
Our understanding of the physics suggests that there must be a close relationship between the energy of the system (modelled in terms of  $\aF$) and the forces acting on the system. The  calculation 
\begin{align*}
-\nabla \frac{\delta\mathcal{E}_G}{\delta c}=-\nabla\left(\rho g x_3\right)=-\rho g \vec{e}_3,
\end{align*}
where  $\vec{e}_3$ is the unit vector in the vertical direction, clearly suggests that $\mathcal{E}_G$ generates the gravitational force.
A similar computation for $\Ent$ yields 
\begin{align}
-\nabla \frac{\delta \Ent}{\delta c}=-\nabla RT\left(
\log\frac{c}{c_0}+1\right)=-RT\frac{\nabla c}{c}. \label{eq:Formal-Der-Ent}
\end{align}
A casual glance at \eqref{eq:Formal-Der-Ent} suggests the following natural questions:
\begin{enumerate}
\item{Why does $\Ent$ have this particular formulation and  does it generate a force?}
\item{Why is diffusion temperature-dependent, and particularly, why is this dependence linear in $T$? Why is the \textit{universal gas constant} $R$ involved?}
\item{Why do the properties  of the  solvent seem to play no role?}
\end{enumerate}
 
\paragraph{Answer to question 1:} Entropy indeed generates a force. This force arises from thermal agitation by the surrounding medium; the temperature $T$ is in fact the temperature of this medium. At finite particle number $N$ this is visible as stochastic forces. In limit $N\rightarrow\infty$ an `entropic force' is what remains.

Let us illustrate this with a simple example.
Consider a collection of bins (see Fig.~\ref{fig:Bin-Model}) with `sites' in them. We think of the sites as \emph{microstates}, and the bins as \emph{macrostates}; many microstates are bundled into a single macrostate, but the number of microstates may vary (and does vary). We let a particle jump from one site to another, that is, from one microstate to another, at random; the only restriction is that the sites have to belong to adjacent bins, i.e., a particle can only jump to a site in a bin adjacent to the current bin.

\begin{figure}[h]
\centering
\includegraphics[scale=0.55]{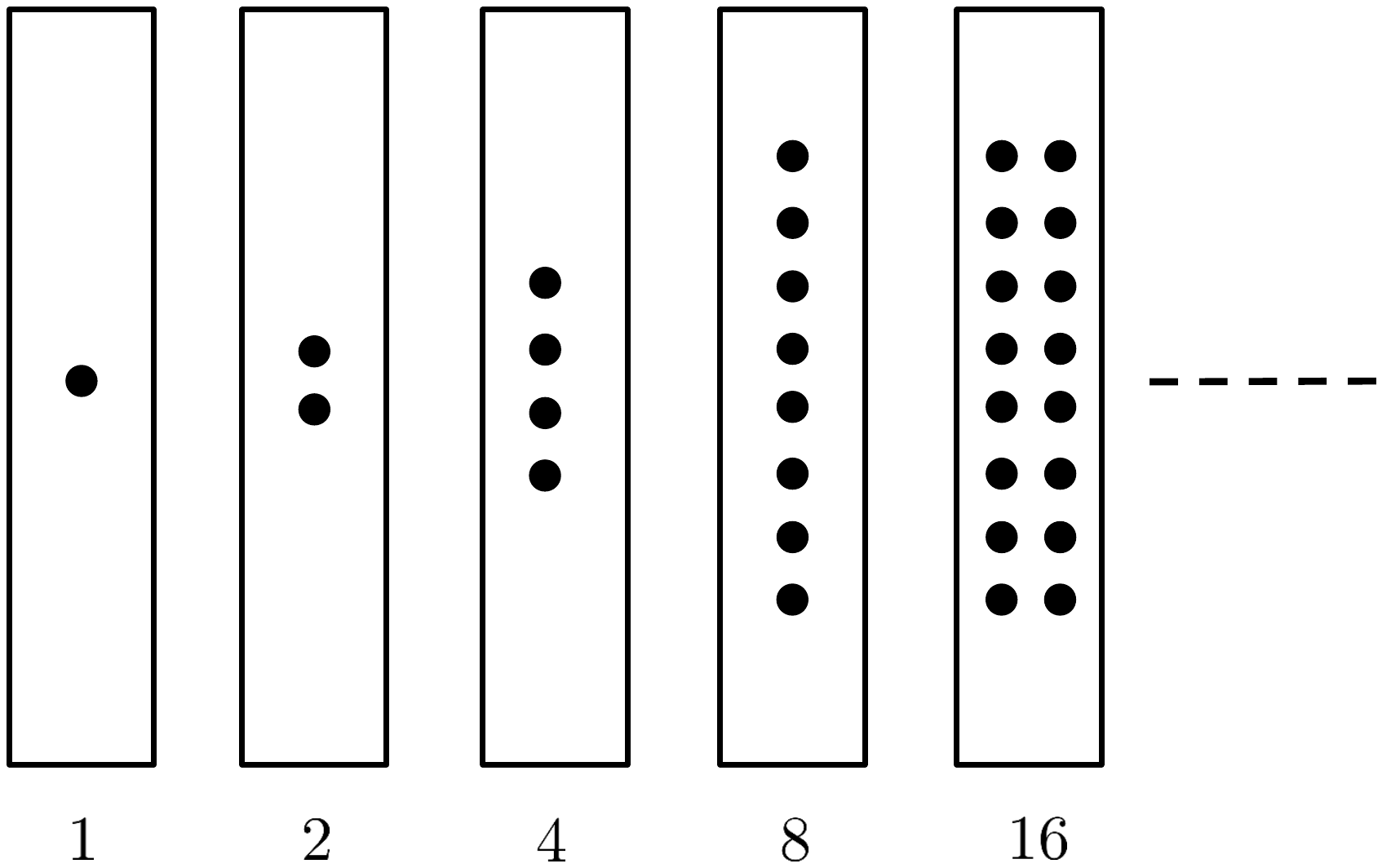}
\caption[BinModel]{Here each bin represents a macrostate and and a particles in the bin represents a microstate corresponding to the associated microstate.}
\label{fig:Bin-Model}
\end{figure}
 
Although the selection of the next site is uniform over the accessible sites, the difference in numbers of sites-per-bin creates a net flux towards the bins with many sites, i.e.\ to the right in Figure~\ref{fig:Bin-Model}. If the random motion of the particles is achieved by some force, then the net flux should correspond to a net force.
This, now, is the interpretation of an `entropic' force.

\begin{example}
An \emph{entropic force} is a force that arises from unbiased microscopic motion in combination with variation in numbers of sites per macrostate. 
\end{example}
As we mentioned above, the quantity `number of sites per macrostate' is often referred to as the \emph{degeneracy} of a macrostate. Note that what counts is not the absolute number of microstates, nor the difference in number, but the \emph{ratio}: in Figure~\ref{fig:Bin-Model} the net flux and force to the right are the same for each bin, because the ratio of sites is the same for each pair of adjacent bins. This explains why the driving force is the spatial derivative of the \emph{logarithm} of the degeneracy, since that measures exactly the variation in ratio from one point in space to the next. 

Note that the calculation in the previous section explains why for empirical measures (i.e. particles) the logarithm of the measure of degeneracy is proportional to $\rho\log\rho$.

\paragraph{Answer to question 2:} On average, each solvent molecule has the same kinetic energy $\frac{1}{2}kT$ in each direction, where $k=$ converts our temperature scale (Kelvin) into our energy scale (Joule). The gas constant $R=k N_A$ is essentially the same as $k$, but measured per mole instead of per molecule. The kinetic energy increases with temperature, and therefore the collisional forces between molecules do the same.\footnote{To be honest, I don't have a good explanation why the diffusion constant scales linearly with temperature. Einstein~\cite{Einstein05} derives this from Gibbs' law of thermodynamics, but that seems overpowered for the purpose. Suggestions are welcome.}

\paragraph{Answer to question 3:} The properties of the solvent do play a role. As we shall see in Chapter~\ref{ch:Wasserstein-dissipation}, the prefactor of the entropy, i.e.\ $RT$, combines with the viscosity to create the diffusion constant. In the entropy, however, dynamics is not taken into account, and only the degeneracy plays a role---and the degeneracy is independent of the solvent. 

\section{Large deviations}
\label{subsec:ldp}

We now turn to a related interpretation of entropy, and especially relative entropy. 
For these purposes, the main role of the relative entropy is in the characterization of \emph{large deviations of empirical measures}. 

Large deviations are best explained by an example. We toss a balanced coin $n$ times, and we call $S_n$ the number of heads. Well-known properties of $S_n$ are\footnote{This section draws heavily from the introduction in~\cite{DenHollander00}.}
\begin{itemize}
\item $\ds \frac1n S_n\stackrel{n\to\infty} \longrightarrow \frac12 $ almost surely (the law of large numbers)
\item $\ds\frac2{\sqrt n} \bigl(S_n-\frac n2\bigr) \stackrel{n\to\infty} \longrightarrow Z$ in law, where  $Z$ is a standard normal random variable (the central limit theorem).
\end{itemize}

The second (which contains the first) states that $S_n$ is typically $n/2$ plus a random deviation of order $O(1/\sqrt n)$. Deviations of this size from the expectation are called normal. Large deviations are those that are larger than normal, such as for instance the event that $S_n\geq an$ for some $a>1/2$. Such large-deviation events have a probability that vanishes as $n\to\infty$, and a \emph{large-deviation principle} characterizes exactly how fast it vanishes. A typical example is
\begin{equation}
\label{ldp:1}
\text{For any }a\geq \frac12, \qquad
\Prob(S_n\geq an) \sim e^{-nI(a)} \qquad\text{as }n\to\infty,
\end{equation}
where
\[
I(a) := \begin{cases}
a\log a+ (1-a)\log (1-a) + \log 2 & \text{if } 0\leq a\leq 1\\
+\infty & \text{otherwise}
\end{cases}
\]
The characterization~\eqref{ldp:1} states that the probability of such a rare event decays exponentially in $n$, for large $n$, for each $a\geq1/2$. 
The function $I$ is called the \emph{rate function}, since it characterizes the constant in the exponential decay. 
\begin{figure}[ht]
\centering
\noindent
\psfig{height=4cm,figure=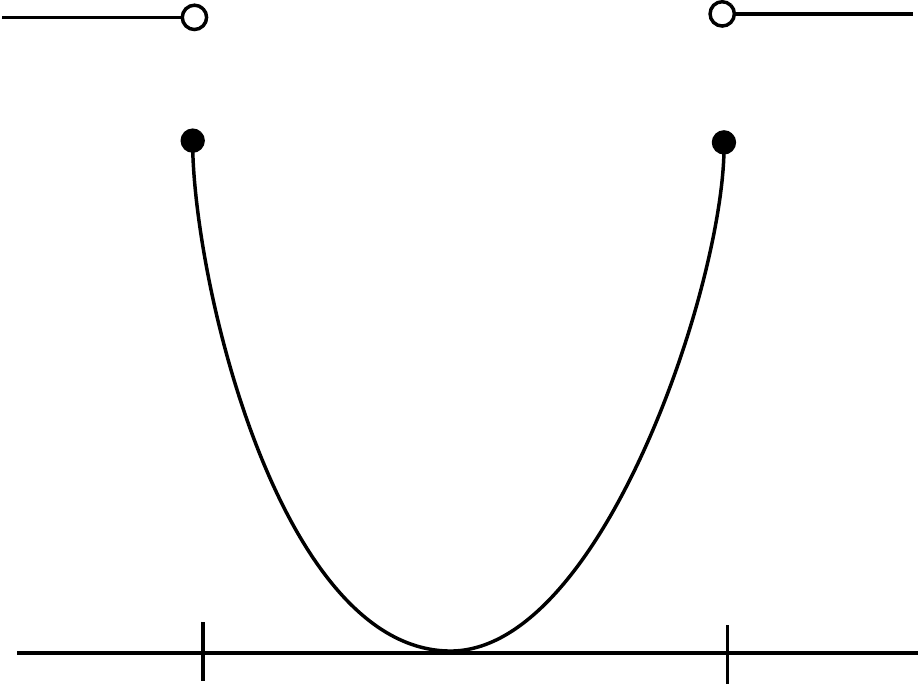}
\caption{The function $I$}
\end{figure}

In order to explain what exactly the symbol $\sim$ in~\eqref{ldp:1} means we give a precise definition of a large-deviation principle.  

\begin{definition}
\label{def:ldp}
A sequence $\mu_n\in \P(\Omega)$ satisfies a \emph{large-deviation principle} with speed $n$ and rate function $I$ iff
\begin{alignat*}2
&\forall O\subset \Omega \text{ open}, &\qquad & \liminf_{n\to\infty}\frac1n\log \mu_n(O)\geq -\inf_{O} I\\
&\forall C\subset \Omega \text{ closed}, &\qquad & \limsup_{n\to\infty} \frac1n \log \mu_n(C) \leq -\inf_{C} I.
\end{alignat*}
\end{definition}

Let us make a few remarks. 
\begin{itemize}
\item The definition of the large-deviation principle is rather cumbersome. We often write it, formally, as
\[
\Prob(X_n\approx x) \sim e^{-nI(x)},
\]
which is intended to mean exactly the same as Definition~\ref{def:ldp}.
\item The characterization~\eqref{ldp:1} can be deduced from Definition~\ref{def:ldp} as follows. Take $\Omega = \R$, the set in which $\frac1n S_n$ takes its values, and define
\[
\mu_n(A) := \Prob \Big(\frac1n S_n\in A\Big).
\]
Then $\Prob(S_n\geq an) = \mu_n([a,\infty))$, and note that $\inf_{[a,\infty)} I = I(a)$ whenever $a\geq1/2$. Supposing that the large-deviation principle has been proved for $\mu_n$ with rate function~$I$ (see e.g.~\cite[Th.~I.3]{DenHollander00}), we then find that for all $1/2\leq a<1$
\begin{align*}
-I(a) = -\inf_{(a,\infty)}\!\!I &\;\leq\; \liminf_{n\to\infty} \Prob(S_n>an)\\
&\;\leq\; \limsup_{n\to\infty} \Prob(S_n\geq an)
\;\leq\; -\inf_{[a,\infty)}\!\!I \;= \;-I(a).
\end{align*}
Therefore, if $1/2\leq a<1$ then the liminf and limsup coincide, and we have that 
\[
\lim_{n\to\infty} \frac1n \log \Prob(S_n\geq an)
=\lim_{n\to\infty} \frac1n \log \Prob(S_n> an)
= -I(a).
\]
This is the precise version of~\eqref{ldp:1}. 
\item In the example of the coin, the two inequalities in Definition~\ref{def:ldp} reduced to one and the same equality at all points between $0$ and $1$, by the continuity of $I$ at those points. In general a rate function need not be continuous, as the example of $I$ above shows; neither is the function $I$ unique. However, we can always assume that $I$ is lower semi-continuous, and this condition makes $I$ unique.
\item Looking back at the discussion in Section~\ref{subsec:degeneracy}, we see that for instance the limit~\eqref{char:H} is a large-deviation description, at least formally. 
We also remarked there that the characterization~\eqref{char:H} can not be true as it stands. In Definition~\ref{def:ldp} we see how this is remedied: instead of a single macrostate $\rho$, we consider open and closed sets of macrostates, which may contain states $\rho$  of the form $k/N$ for different values of~$N$.
\end{itemize}

\begin{remark}
The definition of the large-deviation principle has close ties to two other concepts of convergence. 
\begin{itemize}
\item A sequence of probability measures $\mu_n$ converges \emph{narrowly} (in duality with continuous and bounded functions) to $\mu$ if 
\begin{alignat*}2
&\forall O\subset \Omega \text{ open}, &\qquad & \liminf_{n\to\infty}\mu_n(O)\geq \mu(O)\\
&\forall C\subset \Omega \text{ closed}, &\qquad & \limsup_{n\to\infty}  \mu_n(C) \leq \mu(C).
\end{alignat*}
Apparently, the large-deviation principle corresponds to a statement like `the measures $\frac1n \log\mu_n$ converge narrowly'.

\item The definition in terms of two inequalities also recalls the definition of Gamma-convergence, and indeed we have the equivalence 
\[
\mu_n \text{ satisfies a large-deviation principle with rate function $I$} 
\quad\Longleftrightarrow\quad
\frac1n \H(\,\cdot\,|\mu_n) \stackrel\Gamma\longrightarrow \hat I,
\]
where $\hat I(\nu) := \int I\,d\nu.$ A proof is given in~\cite{Mariani12}.

\end{itemize}

\end{remark}

A property of large deviations that will come back later is the following. Suppose that we have a large-deviation result for a sequence of probability measures $\mu_n$ on a space $\X$ with rate functional $I$. Suppose that we now \emph{tilt} the probability distribution $\mu_n$ by a functional $F:\X\to\R$, by defining the new measure
\[
\tilde \mu_n(A) = \frac{\ds\int_A e^{-nF(x)}\,\mu_n(dx)}{\ds\int_\X e^{-n F(x)}\, \mu_n(dx)}.
\]
This increases the probability of events $x$ with lower $F(x)$, with respect to events $x$ with higher $F(x)$, with a similar exponential rate (the prefactor $n$) as a large-deviation result. 

The large-deviation behaviour of $\tilde \mu_n$ is now given by
\begin{theorem}[Varadhan's Lemma (e.g.~{\cite[Th.~III.13]{DenHollander00}})]
\label{th:VaradhansLemma}
Let $F:\X\to\R$  be continuous and bounded from below. Then $\tilde\mu_n$ satisfies a large deviation principle with rate function 
\[
\tilde I(x) := I(x) + F(x) - \inf_{\X}(I+F).
\]
\end{theorem}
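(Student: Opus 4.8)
The plan is to establish the lower and upper bounds of Definition~\ref{def:ldp} for $\tilde\mu_n$ separately, working from the two ingredients: the large-deviation principle for $\mu_n$ with rate function $I$, and continuity and lower-boundedness of $F$. First I would deal with the normalizing denominator $Z_n := \int_\X e^{-nF}\,d\mu_n$, showing that $\frac1n\log Z_n \to -\inf_\X(I+F)$. The lower bound $\liminf \frac1n \log Z_n \geq -\inf_\X(I+F)$ is obtained by restricting the integral to a small open ball around a near-minimizer $x_0$ of $I+F$: on such a ball $F$ is close to $F(x_0)$ by continuity, and $\mu_n$ of the ball decays no faster than $e^{-n\inf_{\text{ball}}I}$ by the open-set bound, so a limit of shrinking balls gives the claim. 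The matching upper bound $\limsup \frac1n \log Z_n \leq -\inf_\X(I+F)$ is the more delicate half: one covers $\X$ (or, after a truncation argument, a suitable sublevel/compact-like piece of it) by finitely many small sets on each of which $F$ is nearly constant, applies the closed-set bound on each piece, and uses the finiteness of the cover together with $\frac1n\log(a_1^n+\dots+a_k^n)\to\max_i\log a_i$. Lower-boundedness of $F$ is what lets one control the tail where $I$ is large.

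With the denominator understood, the bounds for $\tilde\mu_n(A)$ follow from the same kind of estimates applied to the numerator $\int_A e^{-nF}\,d\mu_n$. For an open set $O$: restrict to a ball inside $O$ around a near-minimizer of $I+F$ over $O$, use continuity of $F$ and the open-set lower bound for $\mu_n$, then divide by $Z_n$; this yields $\liminf \frac1n\log\tilde\mu_n(O) \geq -\inf_O(I+F) + \inf_\X(I+F) = -\inf_O \tilde I$. For a closed set $C$: cover $C$ by finitely many small sets on which $F$ is nearly constant, apply the closed-set upper bound for $\mu_n$ on each, combine via the finite-cover $\max$ rule, and divide by $Z_n$; this gives $\limsup\frac1n\log\tilde\mu_n(C)\leq -\inf_C(I+F)+\inf_\X(I+F) = -\inf_C\tilde I$. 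Throughout, the continuity of $F$ is used only in the mild, local form ``$F$ is within $\e$ of a constant on a small enough ball,'' and lower-boundedness of $F$ guarantees the integrals are finite and the truncation of the part of $\X$ where $I$ is large contributes negligibly.

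The main obstacle is the upper-bound direction, specifically making the finite-cover argument legitimate when $\X$ is not compact. The clean way around this is a two-step truncation: fix a large level $L$ and split $\X$ into the (closed) set $\{I \le L\}$ and its complement. On $\{I > L\}$ one uses the closed-set bound and the lower bound $F \ge \inf F$ to see that its contribution to $\log Z_n$ (divided by $n$) is at most $-L + \text{const}$, which is harmless once $L$ is large. The set $\{I \le L\}$ still need not be compact in full generality, so the genuinely clean statement requires $I$ to have compact sublevel sets (a good rate function) — I would either add that hypothesis tacitly (as is standard, e.g.\ in~\cite{DenHollander00}) or, if $\X$ is already known to be such that $\mu_n$ is exponentially tight, invoke that. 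Given the informal level of these notes, I expect the intended proof simply cites~\cite[Th.~III.13]{DenHollander00}; the sketch above is what a self-contained argument would look like, with the non-compactness bookkeeping being the only real work.
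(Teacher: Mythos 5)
You have correctly anticipated the situation: the paper gives no proof of Theorem~\ref{th:VaradhansLemma} at all, only the citation to~\cite[Th.~III.13]{DenHollander00}, so there is nothing in the text to compare your argument against. Your sketch is the standard textbook proof of the tilted large-deviation principle, and it is essentially correct: the asymptotics $\frac1n\log Z_n \to -\inf_\X(I+F)$ for the normalizer via Varadhan-type lower and upper bounds, the ball-around-a-near-minimizer argument for open sets, and the finite-cover-plus-$\max$ argument for closed sets are exactly the right ingredients, and dividing numerator by denominator produces the shift by $\inf_\X(I+F)$ that appears in $\tilde I$.

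The one substantive point you raise is also the right one to flag: as stated, with $F$ merely continuous and bounded below, the upper bound (both for $Z_n$ and for closed sets) genuinely needs either that $I$ is a good rate function (compact sublevel sets) or that $\mu_n$ is exponentially tight, since otherwise the finite-cover step fails on non-compact spaces; the truncation to $\{I\le L\}$ using $F\ge\inf_\X F$ handles the large-$I$ region but does not by itself make $\{I\le L\}$ coverable by finitely many small sets. This hypothesis is implicit in the cited source and tacit in these notes, so your plan, with that hypothesis added, is a complete and correct route to the theorem. One small bookkeeping remark: when discarding the region where $I$ is large you should work with a closed set (e.g.\ the complement of an open neighbourhood of $\{I\le L\}$) so that the closed-set upper bound of Definition~\ref{def:ldp} actually applies; as written, $\{I>L\}$ is open when $I$ is lower semicontinuous.
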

The final term in this expression is only a normalization constant that makes sure that $\inf \tilde I = 0$. The important part is that $\tilde I$ is the sum of the two functions $I$ and $F$.
In words: if we modify a probability distribution by {tilting} it with an exponential factor $e^{-nF}$, then that tilting function $F$ ends up being \emph{added} to the original rate function $I$.

\section{Entropy as large-deviation rate function}
\label{subsec:first-interpretation}

Now to the question why relative entropy appears in the context of thermodynamics. Consider the following situation. We place $n$ independent particles in  a space $\X$ according to a distribution $\mu\in\P(\X)$, i.e.\ the probability that the particle is placed in a set $A\subset\X$ is $\mu(A)$. We now consider the {empirical measure} of these $n$ particles, which is the measure
\[
\rho_n := \frac1n \sum_{i=1}^n \delta_{X_i},
\]
where $X_i$ is the position of the $i^{\mathrm{th}}$ particle. The empirical measure $\rho_n$ is a random element of $\P(\X)$, and the law of large numbers gives us that with probability one $\rho_n$ converges weakly (in the sense of measures) to the law $\mu$. (This is of course the standard way of determining~$\mu$ if one only has access to the sample points $X_i$).

In this situation the large deviations of $\rho_n$ are given by Sanov's theorem (see e.g.~\cite[Th.~6.2.10]{DemboZeitouni98}). The random measure $\rho_n$ satisfies a large-deviation principle with rate $n$ and rate function 
\[
I(\rho) := \H(\rho|\mu),
\]
or in the shorthand notation that we used earlier,
\begin{equation}
\label{char:ldp-Sanov}
\fbox{$\ds\Prob(\rho_n\approx \rho) \sim e^{-n\H(\rho|\mu)} \qquad \text{as } n\to\infty.$}
\end{equation}
This is such an important result that I state it separately:
\label{quote:H}
\begin{example}
\noindent{The relative entropy is the rate functional of the empirical measure of a  large number of identical particles.}
\end{example}
Note the stress on `empirical measure': the appearance of the relative entropy is intimately linked to the fact that we are considering empirical measures. Section~\ref{subsec:degeneracy} gives an insight into why this is: when passing from a vector of positions to the corresponding empirical measure, there is loss of information, since  particles at the same position are indistinguishable.

\section{Free energy and the Boltzmann distribution}
\label{subsec:free-energy}

In many books one encounters in various forms the following claim. Take a system of particles living in a space $\X$, and introduce an `energy' $E:\X\to\R$ (in Joules, $\un J$) depending on the position $x\in \X$. Bring the system of particles into contact with a `heat bath' of temperature $T$ (in Kelvin, $\un K$), and let it settle into equilibrium. Then the probability distribution of the particles will be given by 
\begin{equation}
\label{prob:Boltzmann}
\Prob(A) = \frac{\ds\int_A e^{-E(x)/kT}\, dx}{\ds \int_{\X} e^{-E(x)/kT}\,dx}.
\end{equation}
This is known as the \emph{Boltzmann distribution}, or \emph{Boltzmann statistics}, and the \emph{Boltzmann constant} $k$ has the value $1.4 \cdot 10^{-23}\un{J/K}$. (Note that it only exists if the exponentials are integrable, which is equivalent to sufficient growth of $E$ for large $x$. We will assume this for this discussion).

\medskip
Where does this distribution come from? The concept of entropy turns out to give us the answer. 

Since we need a system and a heat bath, we take two systems, called $S$ and $S_B$ (for `bath'). Both are probabilistic systems of particles; $S$ consists of $n$ independent particles $X_i\in\X$, with probability law $\mu\in\P(\X)$; similarly $S_B$ consists of $m$ independent particles $Y_j\in\Y$, with law $\nu\in\P(\Y)$. The total state space of the system is therefore $\X^n\times\Y^m$.

The \emph{coupling} between these systems will be via an \emph{energy constraint}. We assume that there are energy functions $e:\X\to\R$ and $e_B:\Y\to\R$, and we will constrain the joint system to be in a state of fixed total energy, i.e. we will only allow states in $\X^n\times\Y^m$ that satisfy 
\begin{equation}
\label{eq:energy-constraint}
\sum_{i=1}^n e(X_i) + \sum_{j=1}^m e_B(Y_j) = \text{constant}.
\end{equation}
The physical interpretation of this is that energy (in the form of heat) may flow freely from one system to the other, but no other form of interaction is allowed. 

Similar to the example above, we describe the total states of systems $S$ and $S_B$ by empirical measures $\rho_n = \frac1n \sum_i \delta_{X_i}$ and $\zeta_m = \frac1m\sum_j \delta_{Y_j}$. We define the average energies $E(\rho_n) := \frac1n\sum_i e(X_i) = \int_{\X}e\, d\rho_n$ and $E_B(\zeta_m) := \int_{\Y} e_B\,d\zeta_m$, so that the energy constraint above reads $nE(\rho_n) + mE_B(\zeta_m) = \text{constant}$.  

By Section~\ref{subsec:first-interpretation}, each of the systems \emph{separately} satisfies a large-deviation principle with rate functions $I(\rho) = \H(\rho|\mu)$ and $I_B(\zeta) = \H(\zeta|\nu)$. However, instead of using the explicit formula for $I_B$, we are going to assume that $I_B$ can be written as a function of the energy $E_B$ of the heat bath alone, i.e. $I_B(\zeta) = \tilde I_B(E_B(\zeta))$. For the coupled system we derive a joint large-deviation principle by choosing that (a) $m = nN$ for some large $N>0$, and (b) the constant in~\eqref{eq:energy-constraint} scales as $n$, i.e. 
\[
nE(\rho_n) + nNE_B(\zeta_{nN}) = n\overline E \qquad\text{for some }\overline E.
\]
The joint system satisfies then a large-deviation principle\footnote{This statement is formal; I haven't yet worked out how to formulate this  rigorously for general state spaces.}
\[
\Prob\Big((\rho_n,\zeta_{nN}) \approx (\rho,\zeta) \;\Big|\; E(\rho_n) + NE_B(\zeta_{nN}) = \overline E\Bigr) \sim \exp \bigl(- nJ(\rho,\zeta)),
\]
with rate functional
\[
J(\rho,\zeta) := 
\begin{cases}
\H(\rho|\mu) +N \tilde I_B(E_B(\zeta))+\text{constant}& \text{if } E(\rho) + NE_B(\zeta) = \overline E,\\
+\infty&\text{otherwise.}
\end{cases}
\]
Here the constant is chosen to ensure that $\inf J=0$.

The functional $J$ can be reduced to a functional of $\rho$ alone,
\[
J(\rho) = \H(\rho|\mu) + N\tilde I_B\left(\frac{\overline E - E(\rho)}N\right)+\text{constant}.
\]
In the limit of large $N$, one might approximate 
\[
N\tilde I_B\left(\frac{\overline E - E(\rho)}N\right) \approx N\tilde I_B(\overline E) - \tilde I_B'(\overline E)E(\rho).
\]
The first term above is absorbed in the constant, and we find
\[
J(\rho) \approx \H(\rho|\mu) -E(\rho)\tilde I_B'(\overline E) +\text{constant}.
\]
In many cases $I_B'$ is negative, since larger energies typically lead to higher probabilities and therefore smaller values of $I_B$. 
Now we simply define $kT := -1/\tilde I_B'(\overline E)$, and we find 
\[
J(\rho) \approx \H(\rho|\mu)+\frac1{kT} E(\rho) +\text{constant}.
\]
Compare this to the typical expression for free energy  $E-TS$; if we interpret $S$ as  $-k\H(\cdot|\mu)$, then we find the expression above, up to a factor $kT$. 

Note that the right-hand side can be written as $\H(\rho|\tilde \mu)$, where $\tilde \mu$ is the tilted distribution
\[
\tilde \mu(A) = \frac{\ds\int_A e^{-e(x)/kT}\, \mu (dx)}{\ds\int_{\X} e^{-e(x)/kT}\, \mu (dx)}
\]
Here we recognize the expression~\eqref{prob:Boltzmann} for the case when $\mu$ is the Lebesgue measure.

This derivation  shows that the effect of the heat bath is to \emph{tilt} the system $S$: a state $\rho$ of~$S$ with larger energy $E(\rho)$  implies a smaller energy $E_B$ of $S_B$, which in turn reduces the probability of~$\rho$. The role of temperature $T$ is that of an exchange rate, since it characterizes the change in probability (as measured by the rate function $I_B$) per unit of energy. When $T$ is large, the exchange rate is low, and then larger energies incur only a small probabilistic penalty. When temperature is low, then higher energies are very expensive, and therefore more rare.
From this point of view, the Boltzmann constant $k$ is simply the conversion factor that converts our Kelvin temperature scale for $T$ into the appropriate `exchange rate' scale.

\medskip
One consequence of the discussion above is that the expression 
\begin{equation}
\label{exp:free-energy-in-derivation-free-energy}
\aF(\rho) := \H(\rho|\mu)+\frac1{kT} E(\rho) +\text{constant}
\end{equation}
is the rate function for a system of particles in contact with a heat bath. Let us make this statement precise, because this will explain some of the questions that we started this chapter with.  

We recall the experiment that we just formulated. For each $n$, draw $n\times m$ particle positions $\{X^n_i\}{i=1}^n$ and $\{Y^n_j\}_{j=1}^{nN}$ from the joint state space $\X^n\times\Y^{nN}$, with distribution $\mu$ (for $X^n_i$) and $\nu$ (for $Y^n_j$), conditioned on the equality~\eqref{eq:energy-constraint} with constant equal to $n\overline E$. Then, writing again $\rho_n := n^{-1}\sum_i \delta_{X^n_i}$, we have seen that
\[
\Prob(\rho_n \approx \rho) \sim \exp(-n\aF(\rho))\qquad \text{as }n\to\infty.
\]
For large $n$, this characterization implies that the distribution of $\rho_n$ is strongly concentrated around the global minimizers of $\aF$---and in the limit $n\to\infty$, it collapses onto this set. This explains `modelling insight [M6]' on page~\pageref{M6}, \emph{the equilibrium states of this system are global minimizers of the free energy}.

\medskip

\begin{danger}
In books on thermodynamics one often encounters the identity (or definition) $T = dS/dE$. This is formally the same as our definition of $kT$ as $-dI_B/dE$, if one interprets $I_B$ as an entropy and adopts the convention to multiply the non-dimensional quantity $I_B$ with $-k$. 

Yet another insight that this calculation gives is the following. It might be puzzling that in defining a free energy (e.g.\ $E-TS$, or $kT \Ent +E$), one adds two rather different objects: the energy of a system seems to be a completely different type of object than the entropy. This derivation of Boltzmann statistics shows that  it's not exactly energy and entropy that one's adding; it really is more like adding two entropies ($\H$ and $I_B$, in the notation above). The fact that we write the second entropy as a constant times energy follows from the coupling and the approximation allowed by the assumption of a large heat bath.
%

\end{danger}

\section{Generalizations}
\label{sec:Free-energy-generalizations}

\paragraph{Interaction energy}

In fact, the argument in the previous section applies just as well with an energy $E$ that is not a simple sum over particles, but includes interaction effects. A typical example would be a pair-interaction energy of the type
\[
\frac1n \sum_{i=1}^n \Vb(X_i) + \frac1{n^2}\sum_{i,j=1}^n \Vi(X_i-X_j) = 
\int_\X \Vb\, d\rho_n + \iint_{\X\times\X} \Vi(x-y)\, \rho_n(dx)\rho_n(dy).
\]
Both electrostatic interaction energies and gravitaional energies are of this type. We come back to this type of energy in Section~\ref{sec:SDE}.

\paragraph{Multiple species}

The case of multiple species of particle can be treated by assuming that each particle has not only a position $X_i\in \Omega$ but also a \emph{type} $T_i\in\{1,\dots,m\}$. This corresponds to  taking the set $\X$ above as $\X:= \Omega\times \{1,\dots,m\}$. We define the empirical measure $\rho_n$ similarly as $\frac1n\sum_i \delta_{(X_i,T_i)}$, which is now a probability measure on the space $\X$ with variables $x\in\Omega$ and $t\in\{1,\dots,m\}$.

Following the same line of reasoning, we again find an approximate rate functional
\[
\H(\rho|\mu) + \frac1{kT} E(\rho) + \text{constant},
\]
where $\mu$ is now the original distribution of the particles on the state space $\X$. If we write 
\[
\rho_i := \rho\bigr|_{\{t = i\}}, \qquad \mu_i := \mu\bigr|_{\{t = i\}},
\]
then 
\[
\H(\rho|\mu) = \sum_i \H(\rho_i|\mu_i),
\]
so that the rate functional becomes
\begin{equation}
\label{exp:free-energy-multiple-species-dimensionless}
\sum_i \H(\rho_i|\mu_i) + \frac1{kT} E(\rho) + \text{constant}.
\end{equation}

\section{Dimensional versions}
\label{sec:dimensional-versions-free-energy}

In applications we work with concentrations that have dimensions, typically of moles per $\un{m^3}$ or per liter, and traditionally free energies have dimensions of energy rather than being dimensionless. To convert a dimensionless version such as~\eqref{exp:free-energy-multiple-species-dimensionless} into a dimensional form we first multiply with an energy quantity to make the dimension that of energy. The traditional choice is to multiply with $kT$. \red{Is `tradition' all there is to it?}

We next connect the dimensionless measures $\rho_i$ with dimensional concentrations $c_i$ in a physical domain $\Omega$ by 
\begin{equation}
\label{rel:rho-c}
\rho_i = \overline c^{-1} \,c_i\, \Lebesgue^3\bigr|_\Omega \qquad\text{with}\qquad \overline c := \sum_i \int_\Omega c_i(x)\, dx.
\end{equation}
The natural choice for $\mu$ in~\eqref{exp:free-energy-multiple-species-dimensionless} is to be proportional to the Lebesgue measure,
\[
\mu = \mu_0\Lebesgue^3\bigr|_\Omega \qquad\text{for some }\mu_0>0.
\]
This corresponds to assuming no preference of the particles for one place above another---apart from that which might be encoded in the energy $E$.
We then calculate that 
\[
kT\cdot \eqref{exp:free-energy-multiple-species-dimensionless} \;=\; 
kT\sum_i \overline c^{-1} \int_\Omega c_i(x)\log \frac{c_i(x)}{\overline c\mu_0}\, dx
 \;+\; E(c_1,\dots,c_m) \;+\; \text{constant},
\]
where we leave the dependence of $E$ on $(c_1,\dots,c_m)$ unspecified for the moment.

This expression can be thought of as the free energy \emph{per particle}: the energy $E$ in Section~\ref{subsec:free-energy} is defined as an energy per particle, and the entropy terms are invariant under changes in particle number. The total free energy is this expression multiplied by the total number of particles $n\overline c$. If $c$ is indeed measured in moles per cubic meter, then 
$n$ is known exactly to be \emph{Avogadro's number}, $N_A\simeq 6.022\cdot 10^{23}$. Multiplying by $N_A\overline c$ we find
\[
RT\sum_i  \int_\Omega c_i(x)\log \frac{c_i(x)}{c_0}\, dx
 \;+\; \tilde E(c_1,\dots,c_m) \;+\; \text{constant}.
\]
Here the constant $R =kN_A \simeq 8.314 \,\un{J}\un K^{-1}\un{mol}^{-1}$ is the same \iindex{universal gas constant} as in Section~\ref{sec:diffusion-particles-fluid}. In this expression we have replaced the energy-per-particle $E$ by the total energy $\tilde E$, and we replaced $\overline c\mu_0$ by a reference concentration $c_0$. Note that the value of $c_0$ only changes the value of the constant in this expression, and therefore we can choose it arbitrarily.

\section{Modelling free energy}
\label{sec:modelling-free-energy}

Let us now summarize. The discussions above give an algorithm how to construct the total free energy in systems described by concentrations. (For completeness: the total free energy is equal to the large-deviation rate functional, multiplied by $kT$ and by the total number of particles.) Take a system with $m$ concentrations $c_1,\dots,c_m:\Omega\to[0,\infty)$ (in moles per $\un{m^3}$), which can exchange energy with a heat bath with temperature $T$, assuming that we know the total energy $E(c_1,\dots,c_m)$ as a function of $(c_1,\dots,c_m)$. Apart from the interaction described by the energy, the particles are assumed to have no interaction with each other. By the arguments above we find that the free energy of this system is
\begin{equation}
\label{def:free-energy-modelling-entropy}
\aF(c_1,\dots,c_m) := E(c_1,\dots,c_m) + {RT}\sum_{i=1}^m \int_\Omega c_i(x)\log \frac{c_i(x)}{c_0}\, dx +\text{constant}.
\end{equation}

We saw in the sections above how the entropy terms in~\eqref{def:free-energy-modelling-entropy} arise from the indistinguishability of the particles of the same species: each species contributes a term $RT\int c\log c/c_0 $ to the free energy. 

\medskip

While modelling the system of diffusing particles, in Section~\ref{sec:diffusion-particles-fluid}, we chose a functional of this type,~\eqref{choice:entropy-diffusion-particles-fluid}. We now recognize that this is a free energy, i.e. a dimensional version of the large-deviation rate functional associated with concentrations of independent particles. This argument gives an interpretation of this functional, but it does not yet explain why it should be a driving force in a gradient flow. That aspect we discuss in the next chapter. 

%
%
%
%
%

\chapter{Free energy dissipated through Wasserstein metrics}
\label{ch:Wasserstein-dissipation}

\red{Think about changing this chapter: making it more academic}

In this chapter I want to establish the following modelling insights.

\begin{example}
\begin{itemize}
\item[M7] Particles moving through a  viscous fluid dissipate energy. If the starting and ending positions of $n$ particles are characterized by two empirical measures $\rho_0$ and $\rho_1$, then the minimally dissipated energy in a stationary Newtonian fluid is 
\[
\frac{n\eta} \tau W_2(\rho_0,\rho_1)^2,
\]
where $\tau$ is the duration of the motion, $\eta$ is a friction parameter, and $W_2$ is the Wasserstein distance.
\item [M8] The Wasserstein distance $W_2$ also characterizes the mobility of empirical measures of Brownian particles, in the sense of large deviations.

\item [M9] In an SDE with Brownian noise, in the many-particle limit, free energy is dissipated through the Wasserstein distance. Therefore in many cases the many-particle limit is a Wasserstein gradient flow of the free energy. 
\end{itemize}
Taken together, these will explain the modelling choices of Section~\ref{sec:diffusion-particles-fluid}.
\end{example}

\red{This doesn't explain the $D$.}

\section{Dissipation in a viscous fluid}

When a spherical particle is dragged through a Newtonian viscous fluid, the relative velocity~$v$ is linearly related to the force $f \,[\un N]$ that is required to keep it moving:
\begin{equation}
\label{eq:linear-drag-law}
f = \eta v.
\end{equation}
(Stokes' law  gives an explicit expression for the \emph{friction coefficient} $\eta$: $\eta = 6\pi \nu r$, where $r$ is the radius of the particle and $\nu\, [\un {Ns/m^2}]$ the dynamic viscosity  of the fluid; see~\cite[Eq.~(126)]{Stokes51} or~\cite[\textsection20]{LandauLifshitzVI87}.)
Therefore, if $X_1(t),\dots , X_n(t)$ are trajectories of $n$ particles, for $t\in[0,\tau]$, then the total work done by the particles on the fluid is equal to 
\[
\sum_{i=1}^n \int_0^\tau f_i(t)\cdot \dot X_i(t)\, dt = \eta  \sum_{i=1}^n \int_0^\tau \dot X_i(t)^2 \, dt
= \frac{\eta}\tau  \sum_{i=1}^n\int_0^1 \bigl(\partial_s X_i(\tau s)\bigr)^2\, ds.
\]
This work is converted into heat by friction with the molecules of the fluid. 

Writing $\rho_s = n^{-1} \sum_i \delta_{X_i(\tau s)}$ for the empirical measure of these particles, we have, using the formula for the Wasserstein tensor of atomic measures,
\[
\frac{\eta }\tau \sum_{i=1}^n \int_0^1 \bigl(\partial_s X_i(\tau s)\bigr)^2\, ds
\stackrel{\eqref{eq:Wass-norm-atomic}}= \frac{\eta n}\tau \int_0^1 \|\dot \rho_s\|_{-1,\rho_s}^2 \, ds
\stackrel{\eqref{def:WassersteinDistance}}\geq \frac{\eta n}\tau W_2(\rho_0,\rho_1)^2.
\]
This lower bound is achieved by a particle motion along straight lines with constant velocity.

\medskip

This observation shows us that 
\begin{example}
\noindent{The Wasserstein distance for empirical measures of particles can be interpreted as the  minimal energy dissipated by moving those particles through a viscous fluid. }
\end{example}
A similar statement holds locally in time: the Wasserstein tensor $\|\dot\rho\|_{-1,\rho}^2$, multiplied by the coefficient $n\eta$, is the power (work per unit of time) expended by the movement of~$n$ particles. The dimensional version of this statement is found by again connecting a concentration $c$ with a particle empirical measure $\rho$ by $\rho = c/\overline c
\,\Lebesgue|_\Omega$ and $n=\overline c N_A$ (see Section~\ref{sec:dimensional-versions-free-energy}). Then, writing  $\dot \rho = -\div \rho v = - \div (cv/\overline c)$, we have 
\begin{equation}
\label{eq:dissipation-Wasserstein-explanantion}
n\eta \|\dot \rho\|_{-1,\rho}^2 = n\eta \int \rho |v|^2 
= \frac {n\eta}{\overline c} \int {|w|^2}c \, dx 
= \eta N_A \int {|w|^2}c \, dx.
\end{equation}

\medskip

This observation provides a first motivation of the choice of the {Wasserstein metric tensor} in~\eqref{def:dissipation-potential-diffusion},
\[
\widetilde\Psi(c,w) = \frac{\tilde \eta}2 \int_\Omega {|w(x)|^2}{c(x)}\, dx,
\qquad \text{where }\dot c = -\div cw.
\]
By the argument above, $\widetilde\Psi(c,w)$ is one-half of the energy dissipation associated with moving the particles of concentration $c$  according to $w$, where $\tilde \eta = \eta N_A$ is a macroscopic friction coefficient. If one is willing to accept that the free energy $\aF$ in~\eqref{choice:entropy-diffusion-particles-fluid} is the driving force in a gradient flow, then modelling the dissipation with $\widetilde\Psi$ corresponds to assuming that dissipation of the free energy happens through viscous dissipation generated by the particles, as they move through a stagnant fluid.

\red{Wrong order: should do dissipation first, and Wasserstein distance second}

\medskip

While this argument has the advantage of being simple, it neglects the fact---put forward in the previous chapter---that the free energy, and especially the entropy terms, can only be understood from a stochastic point of view. In the next section we therefore develop a stochastic  view on the Wasserstein distance, by connecting it to the fluctuations of Brownian particles.

\section{Brownian particles and Wasserstein dissipation}

The previous section gave an interpretation of the Wasserstein distance in terms of viscous dissipation---conversion of work into heat by particles moving through a fluid.
We now focus on the role of the Wasserstein distance in \emph{stochastic} particle systems, and we  first consider a simple case.

Consider the system of $n$ particles $X_{\e,i}$ in $\R^d$ ($i=1,\dots,n$) defined as a rescaled Brownian motion: 
\begin{equation}
\label{def:Brownianparticles}
dX_{\e,i}(t) =\sqrt{2}\,\e\,dW_i(t), \qquad X_{\e,i}|_{\{t=0\}} = x_{0,i}.
\end{equation}
where for each $i$, $W_i$ is a Brownian motion in $\R^d$, $\sigma>0$ is a mobility coefficient, and the vector of positions $x_0 = (x_{0,1},\dots,x_{0,n})\in \R^{n d}$ is fixed.
If we fix $\tau>0$, then  by Schilder's theorem (e.g.~\cite[Th.~5.2.3]{DemboZeitouni98}), the process $\{X_{\e}: t\in [0,\tau]\}$,  satisfies a large-deviation principle for fixed $n$ and small $\e$, 
\[
\Prob\bigl(X_\e|_{[0,\tau]} \approx x|_{[0,\tau]} \bigr) \sim \exp \Bigl[-\frac1{\e^2} I(x)\Bigr],\qquad \text{as }\e\to0,
\]
with {rate functional}
\begin{equation*}
I(x)=\frac{1}{4}\int_0^\tau\abs{\dot{x}(t)}^2\,dt
\qquad\text{provided }x|_{t=0} = x_0.
\end{equation*}

Now note that the exponent can be rewritten, again using the notation $\rho_n = \frac1n \sum_i \delta_{x_i}$, as 
\[
\frac{1}{4\e^2}\int_0^\tau\abs{\dot{x}(t)}^2\,dt
= \frac{1}{4\e^2}\sum_{i=1}^n\int_0^\tau\abs{\dot{x_i}(t)}^2\,dt
= \frac{n}{4\e^2}\int_0^\tau \|\dot \rho_n(t)\|^2_{-1,\rho_n(t)}\,dt.
\]
By the same argument as in the previous section, this expression has a sharp lower bound 
\begin{equation}
\label{ineq:Brownianpart-I-W}
\frac1{\e^2} I(x) \geq 
\frac{n}{4\e^2\tau } W_2\bigl(\rho_n(0),\rho_n(\tau)\bigr)^2.
\end{equation}
Note how the this expression equals the minimal dissipated energy, when $\eta$ is equal to $1/4\e^2$. 

L\'eonard~\cite{Leonard12} was the first to prove the corresponding statement for the many-particle limit of empirical measures of (constant-mobility) Brownian particles, as follows. Let $X_{1,i}$ be the set of Brownian particles defined in~\eqref{def:Brownianparticles} with $\e=1$, and again let $\rho_n = \frac1n \sum_i \delta_{X_{1,i}}$ be the empirical measure. Then
\begin{equation}
\label{ldp:Brownian-abstract}
\Prob\big(\rho_n(t) \approx \rho^1 | \rho_n(0) \approx \rho^0\big) 
\sim \exp\bigl[-nI_t(\rho^1|\rho^0)\bigr] \qquad\text{as } n\to\infty,
\end{equation}
and the rate function $I_t(\,\cdot\,|\,\cdot\,)$ satisfies
\[
tI_t(\,\cdot\, | \rho^0) \to \frac14 W_2(\rho^0,\rho^1)^2 \qquad\text{as } t \downarrow 0
\]
in the sense of Gamma-convergence (see also~\cite{AdamsDirrPeletierZimmer11,DuongLaschosRenger13,PeletierRenger11TR}).
Written informally, this result states that 
\begin{equation}
\label{ldp:W2}
\Prob\big(\rho_n(t) \approx \rho^1 | \rho_n(0) \approx \rho^0\big) 
\sim \exp\Bigl[-\frac n{4t} W_2(\rho^0,\rho^1)^2\Bigr]
\qquad\text{as } n\to\infty \text{ and then } t\downarrow 0.
\end{equation}
This is the sense in which 
\begin{quote}
\emph{The Wasserstein distance characterizes the stochastic mobility of empirical measures of systems of Brownian particles. }
\end{quote}

\section{Brownian particles and Wasserstein dissipation, take 2}
\label{sec:Brownian-Wasserstein-2}

The large-deviation characterization~\eqref{ldp:W2} shows the connection between the Wasserstein distance and Brownian particles, but it doesn't yet explain how the whole gradient-flow arises. For this, it turns out, we need to include `the next term' in the small-$t$ asymptotic expansion of $I_t$. In~\cite{AdamsDirrPeletierZimmer11} we first proved that
\begin{equation}
\label{eq:ADPZ}
I_t(\,\cdot\, | \rho^0) = \frac1{4t} W_2(\rho^0,\rho^1)^2 + \frac12 \Ent(\,\cdot\,) - \frac12 \Ent(\rho^0) + o(1) \qquad\text{as } t \downarrow 0,
\end{equation}
in the sense of Gamma-asymptotic developments~\cite{AnzellottiBaldo93}. This was later generalized to a larger class of systems in~\cite{DuongLaschosRenger13}.

The function on the right-hand side of~\eqref{eq:ADPZ} is well known in the theory of gradient flows, as the basis for a discrete-time approximation of a gradient flow. Construct a sequence $\rho^k$ in an iterative fashion: fix $h>0$, and for each $\rho^{k-1}$  let $\rho^k$ be the solution of 
\[
\min_\rho \frac1{2h} W_2(\rho,\rho^{k-1})^2 + \Ent(\rho), 
\]
starting from some $\rho^0$. Then the piecewise-constant interpolation of this sequence with time step $h$, i.e.\ the function $t\mapsto \rho^{\lfloor t/h\rfloor}$, is an approximation of the Wasserstein gradient flow of~$\Ent$~\cite{JordanKinderlehrerOtto98}.

\section{Interpretation}

Equations~\eqref{ldp:Brownian-abstract} and~\eqref{eq:ADPZ} together illustrate how the dissipation metric $W_2$ and the entropy $\Ent$ together create a macroscopic gradient-flow behaviour. We have two competing phenomena:
\begin{enumerate}
\item For short time, Brownian particles prefer to stay put: this is represented by the large-deviation contribution
\[
\exp \Bigl[-\frac{n}{4t}W_2(\rho^1,\rho_0)^2\Bigr].
\]
\item Macrostates with lower entropy $\Ent$ contain more microstates; this leads to the large-deviation contribution
\[
\exp\Bigl[-\frac n2 \bigl(\Ent(\rho^1)-\Ent(\rho^0)\bigr)\Bigr]
\]
\end{enumerate}
The two effects independently behave like large-deviation effects; because both exponents scale with $n$, they are able to compensate each other, and the resulting behaviour is a mixture of the movement-aversion characterized by $W_2$ and the movement-preference characterized by $\Ent$. This mixture is the gradient flow.

\section{Revisit the derivation of solute diffusion}
\label{sec:solute-diffusion-revisited}

Let us now walk through the derivation of the convection-diffusion equation in Section~\ref{sec:diffusion-particles-fluid}, and interpret the choices made there in the light of the last two chapters. We repeat the relevant parts. 

\medskip

\nb{State space: } We choose a bounded set $\Omega\subset \R^d$ to be the container with the fluid; the particles are represented by their concentration $c:\Omega \to [0,\infty)$ (in moles per $\un m^3$). The state space is therefore $\aZ:= \{c\in L^1(\Omega): c\geq0\}$. Positions in $\Omega$ are labeled $x$.

\nb{Energy: } We choose as driving functional for this system the \emph{free energy} 
\begin{equation}
\label{re-choice:entropy-diffusion-particles-fluid}
\aF: \aZ\to\R, \qquad \aF (c) := RT \int_\Omega c(x)\log \frac{c(x)}{c_0}\, dx + \rho g \int_\Omega x_3 c(x)\, dx,
\end{equation}
where $c_0>0$ is an arbitrary reference concentration, and $\rho$ is the mass density contrast with the fluid.

\nb{{Processes:} }We allow the state $c\in \aZ$ to change through the effects of a velocity $w:\Omega \to\R^d$  by
\begin{equation}
\label{re-modass:c-w}
\dot c + \div cw =0  \quad \text{in $\Omega$},\qquad w\cdot n = 0 \quad \text{on }\partial\Omega,
\end{equation}
or in weak form,
\[
\forall \varphi \in C^1_b(\Omega): \qquad
\partial_t \int_{\Omega} \varphi(x) c(t,x)\, dx - \int_\Omega c(t,x)w(t,x)\nabla \varphi(x)\, dx =0.
\]

\nb{Dissipation potential: }We define the \emph{dissipation potential} on the set of process vectors $w$ as the functional 
\begin{equation}
\label{re-def:dissipation-potential-diffusion}
\widetilde\Psi(c,w) := \frac\eta2 \int_\Omega {|w(x)|^2}{c(x)}\, dx.
\end{equation}
\red{Doesn't explain what to take for the coefficient $\eta$.}

\medskip

We now understand that these choices mean the following. 
\begin{itemize}
\item The free energy~\eqref{re-choice:entropy-diffusion-particles-fluid} arises because we consider \emph{concentrations}; there is no energy term because (a) the particles are independent, and (b) there is no distinction between the points of $\Omega$, from the point of view of the particles. 
\item The dissipation potential is {(one-half of)} the heat generated by a particle movement~$w$ (see page~\pageref{dissipation-dissipation-potential} for the `one-half').
\item The combination of the two can be motivated by the fact that both {are large-deviation rate functionals at the same speed $n$}.
\end{itemize}



\section{Large deviations for SDEs}
\label{sec:SDE}

We can take the explanation of Section~\ref{sec:Brownian-Wasserstein-2} a step further by considering large deviations of the time course of empirical measures.
At the same time we extend the previous example by including interaction of the particles with a background potential $\Vb$ and with each other via an interaction potential~$\Vi$. Specifically, 
we take a system of $n$ particles described by 
\begin{equation}
\label{eq:SDE}
d X_i(t) = -\nabla \Vb(X_i(t))\,dt - \frac1n \sum_{j=1}^n \nabla \Vi(X_i(t)-X_j(t))\, dt + \sqrt 2\, \, dW_i(t).
\end{equation}
The continuum limit, as $n\to\infty$, of this system is the equation
\begin{equation}
\label{eq:interaction}
\dot\rho =  \Delta \rho 
+ \div \rho  \nabla \bigl[\Vb + \rho\ast \Vi\bigr].
\end{equation}
The large-deviation rate functional describing fluctuations of the system is  given by (see~\cite[Theorem~13.37]{FengKurtz06} and~\cite{DawsonGartner87})
\begin{equation}
\label{I:FK}
I(\rho) := \frac14 \int_0^T \Bigl\|\dot \rho - \Delta \rho 
- \div \rho \nabla \bigl[\Vb + \rho\ast \Vi\bigr]\Bigr\|^2_{-1,\rho}\, dt.
\end{equation}

We now connect this expression to the gradient-flow structures defined in Section~\ref{sec:GF-formal}. 
\begin{enumerate}
\item First, define
the free energy $\aF$ as the sum of entropy and potential energy for this system, as in Sections~\ref{subsec:free-energy} and~\ref{sec:Free-energy-generalizations}:
\begin{equation}
\label{def:free-energy-interacting-particles}
\aF(\rho) := \Ent(\rho) + \int_{\R^d}\left[\rho\Vb + \frac12 \rho(\rho\ast \Vi)\right].
\end{equation}
For this functional $\aF$, the Wasserstein gradient $\aK\D \aF$ is equal to (minus) the right-hand side of~\eqref{eq:interaction}:
\[
\aK(\rho)\D\aF(\rho) = \aK(\rho)\bigl[\log \rho + 1 + \Vb + \rho*\Vi\bigr]
= -\div \rho\nabla\bigl[\log \rho + 1 + \Vb + \rho*\Vi\bigr] .
\]
\item Second, expanding the square in~\eqref{I:FK} and using the chain rule~\eqref{eq:Wasserstein-chain-rule}, $I$ can therefore be written as 
\begin{equation*}
\label{eq:FK-FDT}
2I(\rho) 
= \aF(\rho(T))- \aF(\rho(0)) + \frac12 \int_0^T 
\left[ \|\dot \rho\|^2_{-1,\rho} 
 + \|-\D\aF(\rho)\|^2_{1,\rho}
\right] \, dt,
\end{equation*}
Note that $\|\dot \rho\|_{-1,\rho}$ is the Wasserstein dissipation, and $\|\cdot \|_{1,\rho}$ is the dual Wasserstein norm.
Therefore the rate function $I$ is \emph{exactly} the `global definition' of the gradient flow of Definition~\ref{def:GF-global} (actually, up to a factor $2$). 
\end{enumerate}

Finally, note that if we scale time by setting $t = Ts$, for $s\in [0,1]$, then in terms of the rescaled time $s$ we can write
\[
I(\rho) 
= \frac1{4T} \int_0^1 
 \|\partial_s \rho\|^2_{-1,\rho}\, ds + \frac12\bigl[\aF(\rho(T))- \aF(\rho(0))\bigr] +  
 \frac T4\int_0^1 \|\mathord-\D\aF(\rho)\|^2_{1,\rho}, ds.
\]
Compare this with the expression~\eqref{eq:ADPZ}: the term $W_2(\rho^1,\rho^0)^2$ is the infimum over all curves connecting $\rho^0$ to $\rho^1$ in time $1$, and therefore corresponds to the first term above; for the example of purely Brownian particles of Section~\eqref{sec:Brownian-Wasserstein-2}, $\aF = \Ent$, and therefore the second terms of the two expressions are identical; and a separate argument shows that the third term above indeed is expected to be small as $T\to0$~\cite{DuongLaschosRenger13}.

\section{Geometry and reversibility}

\def\diffc{A}
There are interesting connections between the geometry of the Brownian noise, the reversibility of the stochastic process, and the question whether the resulting evolution equation is a gradient flow or not.

This becomes apparent when we modify the system of the previous section by introducing a diffusion matrix $\diffc\in \R^{d\times d}$ and a mobility matrix $\sigma\in \R^{d\times d}$, as follows:
\begin{equation}
\label{eq:SDE-disc}
d X_i(t) = -\diffc\nabla \Vb(X_i(t))\,dt - \frac1n \sum_{j=1}^n \diffc\nabla\Vi(X_i(t)-X_j(t))\, dt + \sqrt 2\, \sigma\, dW_i(t).
\end{equation}

The large-deviation rate functional  of the system is similarly given by 
\begin{equation}
\label{I:FK-disc}
I(\rho) := \frac14 \int_0^T \Bigl\|\dot\rho - \div \sigma\sigma^T\nabla \rho 
- \div \rho \diffc\nabla \bigl[\Vb + \rho\ast \Vi\bigr]\Bigr\|^2_{-1,D(\rho)}\, dt,
\end{equation}
where the norm $\|\cdot\|_{-1,D(\rho)}$ is induced by the inner product
\[
(s_1,s_2)_{-1,D(\rho)} := \int w_1\cdot w_2\, D(\rho)\, dx,
\]
with $D(\rho) = \rho\sigma\sigma^T$. 
As before, the hydrodynamic limit of this system is the minimiser of~$I$,
\begin{equation}
\label{eq:interaction-disc}
\dot\rho =  \div \sigma\sigma^T\nabla \rho 
+ \div \rho \diffc \nabla \bigl[\Vb + \rho\ast \Vi\bigr].
\end{equation}

With the additional parameter freedom in $\diffc$ and $\sigma$, it is not always possible to write~\eqref{I:FK-disc} in the form of~\eqref{ineq:def-GGF-integral}. This depends on whether the cross term in~\eqref{I:FK-disc} is an exact differential, i.e., whether there exists a functional $\E$ such that
\begin{equation*}
\Bigl(\dot\rho,-\div \sigma\sigma^T\nabla \rho 
- \div \rho \diffc \nabla \bigl[\Vb + \rho\ast \Vi\bigr]\Bigr)_{-1,D(\rho)} = \partial_t \E(\rho).
\end{equation*}
This is the case if and only if $\sigma\sigma^T$ is a positive multiple of $\diffc$, a condition that is familiar from the fluctuation-dissipation theorem, also known as the Einstein relation. In that case, and writing $\sigma\sigma^T= kT\diffc$ for some `temperature' $T>0$ and the Boltzmann constant $k$, 
\begin{equation*}
-\div \sigma\sigma^T\nabla \rho 
- \div \rho \diffc \nabla \bigl[\Vb + \rho\ast \Vi\bigr]
= \aK(\rho) \D\aF(\rho),
\end{equation*}
where $\aK(\rho)\xi$ now is defined as $-\div D(\rho)\nabla\xi$ and the free energy $\aF$ is a modification of~\eqref{def:free-energy-interacting-particles},
\begin{equation*}
\aF(\rho) := \Ent(\rho) + \frac1{kT}\int_{\R^d}\left[\rho\Vb + \frac12 \rho(\rho\ast \Vi)\right].
\end{equation*}
Then the rate functional $I$ can be written in the form~\eqref{ineq:def-GGF-integral} as 
\begin{equation*}
\label{eq:FK-FDT-disc}
2I(\rho) 
= \aF(\rho(T))- \aF(\rho(0)) + \frac12 \int_0^T 
\left[ \|\dot \rho\|^2_{-1,D(\rho)} 
 + \|\mathord-\D\aF(\rho)\|^2_{1,D(\rho)}
\right] \, dt
\end{equation*}
and the evolution equation~\eqref{eq:interaction-disc} is the (modified, $D$-) Wasserstein gradient flow of~$\aF$.

\medskip
Our freedom to choose $\diffc$ and $\sigma$ separately gives us the insight that for this system the following four statements are equivalent:
\begin{enumerate}
\item $\sigma\sigma^T= kT\diffc$ for some $T>0$;
\item The evolution~\eqref{eq:interaction-disc} is a $D(\rho)$-Wasserstein gradient flow of $\aF$;
\item The rate functional $I$ can be written in the form~\eqref{ineq:def-GGF-integral};
\item For any  finite number $n$ of particles, the system~\eqref{eq:SDE-disc} is reversible.
\end{enumerate}
This equivalence, which holds for this specific system, suggests a much deeper connection between  reversibility and gradient-flow structure, that we comment on in detail in~\cite{MielkePeletierRenger13TR}.

\section{Comments}
\label{sec:comments-on-Wasserstein-dissipation}

\paragraph{Validity of  Stokes' law.}
A natural criticism of the law~\eqref{eq:linear-drag-law} with $\eta=6\pi\nu r$ would be that it is derived for a macroscopic spherical particle in a continuum viscous fluid, with no-slip boundary conditions. Real particles are not spherical and they are embedded in a sea of other molecules, which may even be of similar size. This is a valid point. 

Surprisingly,  Stokes' law is fairly robust under such generalizations. Molecular-dynamics simulations of hard-sphere particles in a hard-sphere `fluid' show the same law, regardless of the size of the particles (see e.g.~\cite[Fig.~5]{BocquetHansenPiasecki94}). In general, the law holds, with the same coefficient, for particles no smaller than a few times the size of the surrounding particles; below this size the coefficient may be different. See e.g.~\cite{Li09} and the references therein.

\red{
\paragraph{Fick's law.}
As we remarked in Section~\ref{sec:diffusion-particles-fluid}, the beautifully simple Fick's law seems to have vanished from the modelling of diffusion. 

Check the Tsallis entropies and such for reasoning behind linear Fick's law}

%
%

\chapter{Further Examples}
\label{ch:further-examples}

\section{The Allen-Cahn and Cahn-Hilliard models}
The Allen-Cahn or Cahn-Hilliard energy is the functional
\[
\aF: H^1(\Omega) \to [0,\infty], \qquad
\aF(u) = \frac12 \int_\Omega |\nabla u|^2 +\int_\Omega W(u),
\]
where $W$ is a double-well potential on $\R$ with wells at $\pm 1$ of depth $0$; the canonical example is $W(s) = \frac14 (1-s^2)^2$. Two important gradient flows constructed from $\aF$ are the $L^2(\Omega)$- and $H^{-1}(\Omega)$-gradient flows:
\begin{alignat*}3
&L^2(\Omega)\text{-gradient flow: } &\partial_t u &= \Delta u - W'(u) &\qquad &\text{(Allen-Cahn equation)}\\
&H^{-1}(\Omega)\text{-gradient flow: } &\qquad\partial_t u &= -\Delta(\Delta u - W'(u)) &\qquad &\text{(Cahn-Hilliard equation)}
\end{alignat*}
These correspond to taking $\aZ = H^1(\Omega)$ and $\Psi(\dot u) =  \frac12 \|\dot u\|_{L^2(\Omega)}^2$ or $\Psi(\dot u) = \frac12 \|\dot u\|_{H^{-1}(\Omega)}^2$. 

There is much to be said about the modelling of these systems, since there are various different modelling routes that all lead to these same two equations. In a future version of these notes we will return to this. \red{Do this} In the meantime, one modelling route makes use of \emph{systems of multiple components with volume exclusion}; we discuss this situation first in the next section.

\section{Multi-component diffusion with volume constraint}
Consider $m$ species $X_1,\dots,X_m$ with molar concentrations $c_1,\dots c_m$; assume that each species has a molar volume  $\alpha_i$ (in $\un {m^{\mathit d}/mol}$), so that $\alpha_ic_i$ is a volume fraction with dimension~$1$; for a given set $A\subset \R^n$, $\int_A \alpha_i c_i$ is the volume (in $\un {m^{\mathit d}}$) of the subset of $A$ occupied by species $X_i$.  We assume that the species diffuse in a domain $\Omega\subset \R^d$, with diffusion rates that depend on the species, but while preserving the total local volume. We also assume that the species do not enter or leave $\Omega$.

The volume constraint will take the form that $\sum_{i=1}^m \alpha_ic_i(x)$ should be constant and equal to $1$ everywhere in $\Omega$; the is the requirement that the complete mixture fills the whole space.

We now go through the same modelling steps as before. 

\nb{State space: } The state space is the set of $m$-tuples $\bc = (c_1,\dots,c_m)\in L^1_{\geq0}(\Omega)^m$ denoting molar concentrations.

\nb{Energy: } Since only entropy drives the diffusion, the natural choice for the driving functional is 
\[
\aF(\bc) = RT \sum_{i=1}^m \int_\Omega c_i(x) \log \frac{c_i(x)}{c_0}\, dx,
\]
where, as in Section~\ref{sec:dimensional-versions-free-energy}, $R$ is the universal gas constant, $T$ the temperature, and $c_0$ an arbitrary reference concentration.

\nb{Process space: } We have seen in Section~\ref{sec:solute-diffusion-revisited} that the natural process space  for a concentration is a space of $w_i:\Omega\to\R^d$, such that 
\[
\dot c_i + \div c_i w_i = 0, \qquad w_i \cdot n = 0 \quad \text{on }\partial\Omega.
\]
We write this slightly differently, in terms of fluxes $j_i = c_iw_i$:
\begin{equation}
\label{ass:ex:OttoE:cj}
\dot c_i + \div j_i = 0, \qquad j_i \cdot n = 0 \quad \text{on }\partial\Omega.
\end{equation}
However, since we want to enforce a volume constraint, we require that 
\begin{equation}
\label{ex:OttoE:prop}
\partial_t \sum_{i=1}^m \alpha_i c_i = -\sum_{i=1}^m \alpha_i \div j_i \stackrel{(*)}= 0.
\end{equation}
There are (at least) two ways of enforcing the equality $(*)$:
\begin{enumerate}
\item By enforcing $(*)$ itself: $\sum_{i=1}^m \alpha_i \div j_i = 0$ \emph{(global balance)};
\item By enforcing the stronger condition $\sum_{i=1}^m \alpha_i j_i = 0$ \emph{(local balance)}.
\end{enumerate}
The difference between the two properties can be recognized as follows: the first allows the volume constraint to be satisfied through some large-scale rearrangement of the species, while the second enforces that the total local volume flux is conserved. Otto and E discuss the two properties and their consequences at length in~\cite{OttoE97}.

\nb{Dissipation potential: } We have seen in Section~\ref{sec:solute-diffusion-revisited} that a natural dissipation potential for the diffusion of solutes is (after generalization to multiple species)
\[
\widetilde \Psi(\bc;\bj) = \sum_{i=1}^m \frac{\eta_i}2 \int_\Omega \frac1{c_i}{|j_i|^2},
\]
where $\eta_i$ are parameters specifying the relative ease of diffusion.

\begin{remark}
There is something fishy about the modelling of this dissipation potential. Recall that the derivation of the dissipation in Chapter~\ref{ch:Wasserstein-dissipation} was based on the Stokes law that characterizes how a single spherical particle moves through an otherwise stationary fluid. In the current case, however, the remaining fluid can not be considered stationary. To give an example, consider the `local-balance' condition in the case of two fluids of equal-size particles that each fill half of the space. From the point of view of a single particle of fluid~A, half of the remaining particles (the A particles) are performing macroscopically the same movement, and therefore provide no friction; the other half (the B particles) are moving in the opposite direction, with the same velocity (due to the local-balance condition) and therefore the relative velocity is double the velocity of the particle! A faithful modelling of the dissipation in this case should be based on a  more detailed description of exactly how the different particles organize themselves at a small scale. This can be a tough problem, because of subtle attraction and repulsion effects, and resuling small-scale patterning.

For the moment, however, we stick with this expression, since we prefer to focus on the consequences of the choice for local vs.\ global balance.
\end{remark}

\nb{Derive the equations: } For the case of \textbf{global} balance, the equations are given by the minimization problem
\[
\min_{\bj,\dot \bc}\  \Bigl\{ \widetilde \Psi(\bc;\bj) + \langle \aF'(c), \dot c\rangle:\ \dot \bc \text{ and } \bw \text{ connected by }\eqref{ass:ex:OttoE:cj}\text{ and $\bj$ satisfies global balance}\Bigr\},
\]
with stationarity condition
\[
0 = \sum_{i=1}^m \int_\Omega \biggl\{\eta_i \frac{j_i}{c_i} \tilde \jmath_i - RT \log {c_i}{c_0} \div \tilde \jmath_i + p\alpha_i \div \tilde \jmath_i\bigg\}
\]
for all $\tilde \jmath_i$, where $p$ is a Lagrange multiplier. It follows that 
\[
j_i = \frac{1}{\eta_i} \Bigl(- RT\nabla c_i + \alpha_i c_i \nabla p\Bigr),
\]
with resulting evolution equations 
\[
\dot c_i = \frac1{\eta_i} \Bigl(RT\Delta c_i -\alpha_i \div c_i\nabla p\Bigr).
\]
By~\eqref{ex:OttoE:prop} and~\eqref{ass:ex:OttoE:cj} the Lagrange multiplier $p$, which has the interpretation of a pressure, satisfies
\[
\div\Bigl(\sum_{i=1}^m \frac{\alpha_i^2 c_i}{\eta_i} \nabla p\Bigr) = RT \sum_{i=1}^m \frac{\alpha_i}{\eta_i} \Delta c_i,
\]
with boundary condition
\[
\partial_n p  = \Bigl(\sum_{i=1}^m \alpha_i c_i\Bigr)^{-1} RT \sum_{i=1}^m \partial_n c_i.
\]

\medskip
For \textbf{local} balance, a similar reasoning leads to a vector-valued Lagrange multiplier $\lambda: \Omega\to\R^d$, and the equations
\[
\dot c_i = \frac1{\eta_i} \Bigl(RT\Delta c_i -\alpha_i \div c_i\lambda\Bigr),
\]
and 
\[
\lambda = -RT \sum_{i=1}^m \frac1{\alpha_ic_i} \nabla c_i.
\]

\subsection{Discussion}

\paragraph{Local and global balance}
How to choose between the local and global balance condition? Since the local condition is more stringent than the global one, an Occam-razor-type argument suggests that one should choose the global condition unless there is a good reason to choose the local one. 

One good reason for choosing the local condition is when the microscopic dynamics is incapable of producing large-scale motion. An example of this is Kawasaki exchange dynamics, in which particles on a lattice exchange with their neighbours when a stochastic clock rings. Since each individual exchange is purely local, there is no mechanism to create large-scale exchanges; and after upscaling such a dynamics will lead to a local-balance condition.

\section{A moving vesicle in a viscous fluid with diffusing solutes} 
\label{sec:ex:vesicle}


In this example $\Omega\subset \R^d$ is not fixed, but moves around in some viscous fluid. A typical case is a vesicle (a biological container bounded by a membrane of lipids) immersed in water. Inside $\Omega$ some chemical is dissolved in the fluid, which can diffuse freely inside $\Omega$, but can not pass the boundary $\partial\Omega$. The surrounding fluid can pass through $\partial \Omega$ albeit with some resistance. 

We now go through the same modelling steps as before.

\nb{State space: } The state is a set of pairs:
\[
\aZ := \{(\Omega,c): \Omega\subset \R^d, \ c\in L^1(\R^d), \ \supp c\subset \Omega\}.
\]
The requirement that the solute stays within $\Omega$ is encoded in the definition of $\aZ$. 

\nb{Energy: } The boundary of the vesicle is assumed to have surface tension, which implies that the domain $\Omega$ has \emph{surface energy}. The driving force should therefore be the sum of (negative) entropy and surface energy, the \emph{free energy}
\[
\aF: \aZ\to\R, \qquad \aF (\Omega,c) := RT \int_\Omega c(x)\log \frac{c(x)}{c_0}\, dx 
+ \alpha |\partial \Omega|,
\]
where we write $|\partial\Omega|$ for the perimeter of $\Omega$ (the surface area in three dimensions, and the length of the boundary in two dimensions). The parameter $\alpha>0$ is the energy per unit area, which is often called the surface tension.

\nb{Processes: }We allow the state $(\Omega,c)\in \aZ$ to change through the effects of three objects, $w$, $u$, and $v_n$:
\begin{enumerate}
\item The evolution of $\Omega$ is characterized by the normal velocity $v_n$ of the boundary $\partial\Omega$ (the normal $n$ points outwards);
\item The water moves with velocity $u:\R^d\to\R^d$, which is required to be divergence-free, $\div u =0$;
\item The solute concentration evolves (as above) through a solute velocity $w$ with $\supp w \subset \Omega$:
\begin{equation}
\label{modass:Intro-osmosis}
\dot c + \div cw =0  \quad \text{in $\cD'(\R^d)$},
\end{equation}
or again in weak form,
\begin{equation}
\label{modass:Intro-weakform-osmosis}
\forall \varphi \in C^1_b(\R^d): \qquad
\partial_t \int_{\R^d} \varphi(x) c(t,x)\, dx - \int_{\R^d} c(t,x) w(t,x)\nabla \varphi(x)\, dx =0.
\end{equation}
Note that we now consider $w$ defined on the whole of $\R^d$, and the no-flux boundary condition of the previous case has vanished. Indeed the normal flux need not be zero on the boundary: if the boundary moves, then it will typically happen that nearby solutes move with it. A formal calculation with the weak form~\eqref{modass:Intro-weakform-osmosis} and the condition $\supp c, \supp w \subset \Omega$ shows that it indeed implies a weak version of the boundary condition
\begin{equation}
\label{modass:Intro-osmosis-bc}
w\cdot n = v_n \qquad\text{on }\partial\Omega.
\end{equation}
\end{enumerate}
Again these assumptions contain a number of modelling choices, such as the fact that the water is assumed to be incompressible. Another choice is hidden in the fact that $v_n$ may be different from  $u\cdot n$, i.e. that the boundary does not necessarily move with the fluid---this implies that the fluid can move through the boundary.

In this case a process vector is thus a triple $(w,u,v_n)$.

\nb{Dissipation potential: }We define the {dissipation potential} on the set of process vectors $(w,u,v_n)$ as the functional 
\begin{equation}
\label{def:dissipation-potential-osmosis}
\widetilde\Psi(\Omega,c;w,u,v_n) := \frac{\eta_c}2 \int_\Omega c{|w-u|^2}
  + \frac{\eta_u}2 \int_{\R^d} |\e(u)|^2  
  + \frac{\eta_b}2 \int_{\partial\Omega} (u\cdot n-v_n)^2.
\end{equation}
Here $\eta_c$, $\eta_u$, and $\eta_b$ are three friction-type constants; $\e(u) = \frac12(\nabla u + \nabla u^T)$ is the symmetric part of the gradient $\nabla u$. Note the different domains of integration.

The first term is the same as in the previous example, with a twist: we first subtract $cu$ from $w$, before we penalize the result. This has the effect of measuring this dissipation in terms of the difference with the convective flux $cu$. The idea is that simply convecting the solute along with the flow $u$ of the fluid, i.e. with flux $cu$, should not lead to diffusive dissipation; differences with respect to $cu$ should. The second term is a measure of  dissipation in the viscous fluid, as a result of shear in the velocity field $u$. The inclusion of this term therefore reflects the choice that fluid movement is accompanied by dissipation of energy through viscous friction. The third term similarly measures dissipation due to friction, but  in the boundary, in terms of the relative velocity $u\cdot n - v_n$ between fluid and boundary. The inclusion of this term therefore implies that water can move through the boundary, but doing so requires dissipation of energy.

\red{Doesn't explain what to take for the coefficient $\eta_c$, or for the others.}

\nb{Derive the equations: } We now apply the algorithm to find the equations of motion, by minimizing
\begin{equation}
\label{alg:min-osmosis}
(w,u,v_n) \mapsto \widetilde\Psi(\Omega,c;w,u,v_n) + \langle \aF'(\Omega,c), (w,u,v_n)\rangle.
\end{equation}
The term in angle brackets requires some specification. If $t\mapsto (\Omega_t,c_t)$ is a curve in $\aZ$, with process vector $(w,u,v_n)$ at $t=\tau$, then by the Reynolds transport theorem
\begin{align*}
\partial_t  \int_{\Omega_t} c_t\log\frac{c_t}{c_0}  \Bigr|_{t=\tau}  
  &= \int_{\Omega_\tau} \Bigl(\log\frac {c_\tau}{c_0}+1\Bigr) \dot c_t\bigr|_{t=\tau} 
   + \int_{\partial\Omega_\tau} c_\tau\log \frac{c_\tau}{c_0} \;v_n.
\end{align*}
The first integral on the right-hand side can be rewritten using~\eqref{modass:Intro-osmosis} and~\eqref{modass:Intro-osmosis-bc} into
\begin{align*}
-\int_{\Omega_\tau} \Bigl(\log \frac{c_\tau}{c_0}+1\Bigr) \div w
 &= \int_{\Omega_\tau} w\cdot {\nabla c_\tau}
   - \int_{\partial\Omega_\tau} \Bigl(\log \frac{c_\tau}{c_0}+1\Bigr) c_\tau w\cdot n\\
 &= \int_{\Omega_\tau} w\cdot{\nabla c_\tau}
   - \int_{\partial\Omega_\tau} \Bigl(\log \frac{c_\tau}{c_0}+1\Bigr) c_\tau v_n
\end{align*}
By the properties of the total curvature $H$ of the surface $\partial\Omega$ (i.e.\ the sum of principal curvatures, or equivalently $d$ times the mean curvature),
\[
\partial_t |\partial \Omega_t|\Bigr|_{t=\tau} = -\int_{\partial\Omega_\tau} H v_n.
\]
Therefore, and this is the definition of the angle-bracket term, 
\[
\langle \aF'(\Omega_\tau,c_\tau), (w,u,v_n)\rangle := 
\partial_t \aF(\Omega_t,c_t) \bigr|_{t=\tau} = 
RT \int_{\Omega_\tau} w\cdot {\nabla c_\tau} 
- \int_{\partial\Omega_\tau} \big[RT c_\tau + \alpha H\bigr]\,v_n.
\]

\medskip

Minimization of~\eqref{alg:min-osmosis} with respect of each of the three components leads to the system of equations
\begin{alignat*}2
\forall \tilde w: &\quad &0 &= \eta_c \int_\Omega c{(w-u)\cdot\tilde w}
  + RT\int_\Omega \tilde w\cdot {\nabla c} \\
\forall \tilde u: &\quad &0 &= \eta_c \int_\Omega c(u-w)\cdot \tilde u + \eta_u \int_{\R^d} \e(u):\e(\tilde u) + \eta_b \int_{\partial\Omega} (u\cdot n-v_n) \tilde u\cdot n - \int_{\R^d} p\div \tilde u\\
\forall \tilde v_n: &\quad &0 &= \eta_b \int_{\partial\Omega}(v_n-u\cdot n) \tilde v_n
- \int_{\partial\Omega} \big[RT c + \alpha H\bigr]\,\tilde v_n.
\end{alignat*}
Note the appearance of the pressure $p$ as a Lagrange multiplier associated with the constraint $\div u = 0$. Because of this Lagrange multiplier the second equation holds for each~$\tilde u$, even for those that are not divergence-free.

By performing similar manipulations on these equations as in the previous example, we derive that 
\begin{align*}
cw &= cu - \frac{RT}{\eta_c} \nabla c \quad\text{in }\Omega, \qquad  w\cdot n = v_n \text{ on } \partial\Omega,\\
v_n &= u\cdot n +\frac1{\eta_b} [RTc + \alpha H]\text{ on } \partial\Omega,
\end{align*}
and for $u$ we find the set of equations and boundary conditions
\begin{alignat*}2
-\div \sigma &= -RT\nabla c  & \qquad & \text{in }\Omega\\
-\div \sigma &= 0 & \qquad & \text{in }\R^d\setminus \Omega\\
\div u &= 0 &&\text{in }\R^d\\
[\sigma]\cdot n &= \eta_b(u\cdot n-v_n)n&& \text{on } \partial\Omega.
\end{alignat*}
Here $\sigma = \eta_u \e(u) - pI$, and $[\sigma]$ is the jump in $\sigma$ over $\partial\Omega$, i.e. $\sigma_{\mathrm{ext}}-\sigma_{\mathrm{int}}$.

Combining these expressions with~\eqref{modass:Intro-osmosis} and~\eqref{modass:Intro-osmosis-bc} we find
\begin{subequations}
\label{eq:osmosis}
\begin{alignat}2
-\div \sigma &= -RT\nabla c  & \qquad & \text{in }\Omega\label{eq:osmosis-u-inner}\\
-\div \sigma &= 0 & \qquad & \text{in }\R^d\setminus \Omega\label{eq:osmosis-u-outer}\\
\div u &= 0 &&\text{in }\R^d\\
\dot c &= -\div (cu) + \frac{RT}{\eta_c} \Delta c & \qquad & \text{in }\Omega\label{eq:osmosis-c}\\
[\sigma]\cdot n &= -RTc - \alpha H&& \text{on } \partial\Omega\label{eq:osmosis-sigmabc}\\
cv_n &= \Bigl(\frac{RT}{\eta_c} \nabla c - cu\Bigr)\cdot n   && \text{on } \partial\Omega\label{eq:osmosis-cbc}\\
v_n &= u\cdot n +\frac1{\eta_b} [RTc + \alpha H]&&\text{on } \partial\Omega.\label{eq:osmosis-vn}
\end{alignat}
\end{subequations}

\subsection{Discussion}

\paragraph{Interpretation.}
The equations~\eqref{eq:osmosis} combine a number of effects. The first three equations describe Stokes flow in the two domains~$\Omega$ and $\R^d\setminus \Omega$, driven by boundary forcing. Note that the term $RTc$ is the same as the \emph{osmotic pressure} given by the Van 't Hoff equation (see e.g.~\cite[Sec.~5.4]{AtkinsPaula06}). If necessary, it can be incorporated into the pressure $p$, since this pressure is a Lagrange multiplier and therefore not known explicitly. In that case the osmotic pressure disappears from equations~\eqref{eq:osmosis-u-inner} and~\eqref{eq:osmosis-sigmabc}; it still remains present in the boundary conditions~\eqref{eq:osmosis-cbc} and~\eqref{eq:osmosis-vn}. 

Equation~\eqref{eq:osmosis-c} is a traditional convection-diffusion very similar to the one we derived in Section~\ref{sec:diffusion-particles-fluid}; the only difference is the convective term $\div cu$, which arises as the result of the large-scale flow field $u$. 

Equation~\eqref{eq:osmosis-vn} gives a direct illustration of the phenomenon of osmotic swelling: on the right-hand side, apart from the passive convection term $u\cdot n$, there is the balance of the two pressures on the membrane. The osmotic pressure $RTc$ forces the membrane outwards, and for convex shapes the curvature term $-\alpha H$ acts in the opposite direction. 

\paragraph{Extensions.} In the case of a vesicle surrounded by a lipid bilayer, the assumed linear dependence of the energy on the surface area is rather crude. Actual lipid bilayers can extend only very slightly, and a better assumption would be a dependence of the type $\alpha'(|\partial\Omega|-A_0)^2$. With this choice, the derivation would similar with only $\alpha$ replaced by $2\alpha'(|\partial\Omega|-A_0)$. 

A further improvement could be to treat the lipid bilayer as a two-dimensional viscous incompressible fluid. In that case one would define the vectorial membrane velocity $v_{\mathrm{mem}}$, replace the normal velocity $v_n$ above by $v_{\mathrm{mem}}\cdot n$, and include a dissipation term of the form $\int_{\partial\Omega} |d|^2$, where $d$ is an appropriate measure of in-plane strain (see~\cite{ArroyoDeSimone09}).

\paragraph{Limiting cases.} It is instructive to consider setting constants to $0$ or $+\infty$, and compare the effect (a) on the free energy and the dissipation potential, and (b) on the equation~\eqref{eq:osmosis}. For instance, if $\eta_b=\infty$, which means in~\eqref{def:dissipation-potential-osmosis} that trans-membrane fluxes require infinite amounts of dissipated energy, then indeed  equation~\eqref{eq:osmosis-vn} reduces to $v_n=u\cdot n$ and there is no flux through the boundary. If $\eta_b=0$, on the other hand, implying that the fluid can move through the membrane without any friction, then equation~\eqref{eq:osmosis-vn} reduces to the force balance $RTc = -\alpha H$. This latter boundary condition is known as the~\emph{Gibbs-Thomson law}~\cite[Sec.~7.3.1.3]{BirdiS08}, and this discussion therefore gives some insight into the type of modelling choices that lead to this law.

Similarly, if $\alpha=0$, then the boundary can increase without limit, and indeed in~\eqref{eq:osmosis-sigmabc} and~\eqref{eq:osmosis-vn} the curvature then no longer appears. Or if $\eta_u=\infty$, then the fluid is infinitely viscous, and indeed equations~\eqref{eq:osmosis-u-inner}, \eqref{eq:osmosis-u-outer}, and~\eqref{eq:osmosis-sigmabc} give $u=0$.

\drop{
\section{Vesicles with stiff membranes}

%
%
%
%
%
%
%

\section{Cahn-Hilliard-type models}









\section{Chemical reactions}
\section{A chemical reaction}

The next example is a simple chemical reaction system,
\[
\begin{tikzcd}[column sep=small]
& A \arrow[leftharpoondown, shift right=0.4ex]{dl}[swap]{3}\arrow[rightharpoonup, shift left=0.4ex]{dl}
\arrow[leftharpoondown, shift right=0.4ex]{dr}\arrow[rightharpoonup, shift left=0.4ex]{dr}{1} & \\
C \arrow[leftharpoondown, shift right=0.4ex]{rr}[swap]{2}\arrow[rightharpoonup, shift left=0.4ex]{rr} & & B
\end{tikzcd}
\]

\nb{State space: }The system is characterized by the concentrations of the three species $A$, $B$, and $C$, so that the state space $\aZ$  is $\R^3$ with $z= (z_A,z_B,z_C)$.

\nb{Energy: } We choose $\aF (z) := \frac12 z^TAz$ where $A=\diag(a_A,a_B,a_C)$ is the matrix of \emph{activity coefficients}.

Why this is a good choice may not be obvious at this moment; we discuss this in more detail in Section~\ref{sec:chemistry}.

\medskip

Whereas in the previous example we characterized the dissipation directly in terms of the change of state, $\dot x$, in many cases it is useful to characterize the set of \emph{state changes} separately from the set of \emph{states}. This is called choosing the \emph{kinematics} of the system. 

\nb{Kinematics: } The three states change because of three chemical reactions, with rates (say) $r_1$, $r_2$, and $r_3$. The relationship between reaction rate and concentration change is given by the \emph{stoichiometric matrix} $N$, for which $N_{ij}$ is the change in concentration $i$ as the result of reaction $j$:
\[
\dot z = Nr \qquad\text{with}\qquad 
N =\begin{pmatrix} 
  -1 & 1 & 0\\
  0 & -1 & 1\\
  1 & 0 & -1
\end{pmatrix}.
\]
We call the vector $r =(r_1,r_2,r_3)^T$ the \emph{kinematic tangent vector} (see Section~\ref{sec:tangents}).

\nb{Dissipation mechanism: }Energy is dissipated in the reactions, and we now choose how. Each of the three reactions is parametrized by a global reaction rate parameter $k_i$, $i=1,2,3$, which may depend on all the concentrations $z$. The Onsager operator $\aK$ is now defined as \red{Have we defined the name Onsager operator yet?}
\begin{equation}
\label{def:aK-linearizedchemistry}
\aK(z) = N\begin{pmatrix}
  k_1(z) & 0 & 0\\
  0 & k_2(z) & 0\\
  0 & 0 & k_3(z) \end{pmatrix} N^T.
\end{equation}
The reasoning behind this choice can be understood as follows. We combine three elements of information:  (a) this operator characterizes the right-hand side in the equation $\dot z = -\aK(z)\aF'(z)$, (b) we chose $\dot z$ to be of the form $Nr$, and (c) $\aK$ is symmetric. Therefore it follows that $\aK$ should be of the form $\aK(z) = Nf(z) N^T$ for some matrix-valued function $f$. Here we make the choice that $f$ is diagonal. 

\nb{Result: } Putting all this together and applying~\eqref{eq:GF-intro-K} we find the equation
\begin{equation}
\label{eq:resultLinearizedChemstry}
\dot z = - N \begin{pmatrix} 
  k_1(z)(a_Az_A - a_Bz_B)\\
  k_2(z)(a_Bz_B - a_C z_C)\\
  k_3(z)(a_Cz_C - a_Az_A)
\end{pmatrix}
\end{equation}
In this formula we can see that taking $f$ diagonal implies that the global rate, which controls both forward and backward reactions, may be concentration dependent, but this dependence will be the same for the individual forward and backward rates.

\subsection*{Remarks} 

\paragraph{Modelling choices.}

\paragraph{Additivity of $\aK$.} Note how the final result~\eqref{eq:resultLinearizedChemstry} can be additively decomposed in three terms, corresponding to separate reactions, 
\[
\dot z = \aK_1(z)\aF'(z) +\aK_2(z)\aF'(z) + \aK_3(z)\aF'(z),
\]
where $\aK_j = NK_j N^T$ with $K_j$ a matrix with only $k_j$ as the $j$-th diagonal element. Note how the driving functional $\aF$ is the same for each term, but the Onsager operators are different. This is a general principle, that we discuss in more detail in Section~\ref{sec:additivity-Onsager}, and which will be useful in modelling more complex systems, since it will allow us to model various processes separately and add the results.

\paragraph{Conservation of mass.}
Note that the range of $N$ is orthogonal to $(1,1,1)^T$. This is another way of saying that the equation $\dot z = Nr$ conserves the total concentration $\sum_i z_i$, which is of course expected for this system. 

It also implies that $\aK$ is not invertible, and therefore $\aG$ is not well-defined as the inverse of $\aK$. However, the dissipation potentials $\Psi$ and $\Psi^*$ \emph{are} both well-defined. In this case $\Psi^*$ is naturally constructed from $\aK$,
\[
\Psi^*(z,\xi) = \frac12 \xi^T \aK(z) \xi,
\]
and $\Psi$ is determined by duality:
\[
\Psi(z,s) = \sup_\xi \Bigl[ \xi\cdot s - \frac12 \xi^T \aK(z) \xi\Bigr]
= \begin{cases}
\inf\bigl\{ \tfrac12 \sigma^T k^{-1} \sigma : s = N\sigma \bigr\}& \text{if $s\in \range(N)$}\\
+\infty & \text{otherwise}.
\end{cases}
\]
Here $k=\diag(k_1,k_2,k_3)$ is the diagonal matrix in~\eqref{def:aK-linearizedchemistry}. Note how the degeneracy of $\aK$---the lack of full rank---creates the singular nature of $\Psi$.

Of course there is an alternative way to implement the conservation of mass, by switching to the affine subspace of $\R^3$ consisting of vectors $z$ with prescribed $\sum_i z_i$. In this case the tangent space will be two-dimensional instead of three-dimensional, and $\aK$ and $\aG$ will both be non-singular. 

\section{Cross-diffusion}
\section{Diffusion of species on interfaces, fixed and moving}
\section{Phase-field models}
\section{Rate-independent systems}
\section{Quantum Mechanics}
\section{Other entropies}
\section{\red{Many others}}
\section{The $H^1$-$L^2$-gradient flow}

\red{Modify for new context}

Question~\ref{q:H1L2} of these notes was `where does the $H^1$-$L^2$ gradient-flow formulation of the diffusion equation come from?' (see page~\pageref{q:H1L2}). I don't have a convincing model for this, but actually the absense of a good model is even more interesting than its existence might have been.

In this section I describe my best shot. It's an unreasonable, unconvincing model, but it's the best way I have of connecting this gradient flow to a model.

In this model the energy and dissipation are mechanical, not probabilistic. The continuum system is the limit of discrete systems, consisting of $n$ spheres in a viscous fluid, arranged in a horizontal row and spaced at a distance from each other. Each sphere is constrained to move only vertically, so that the degrees of freedom are the $n$ vertical positions $u_i$. 

\begin{figure}[h]
\centering
\noindent
\psfig{figure=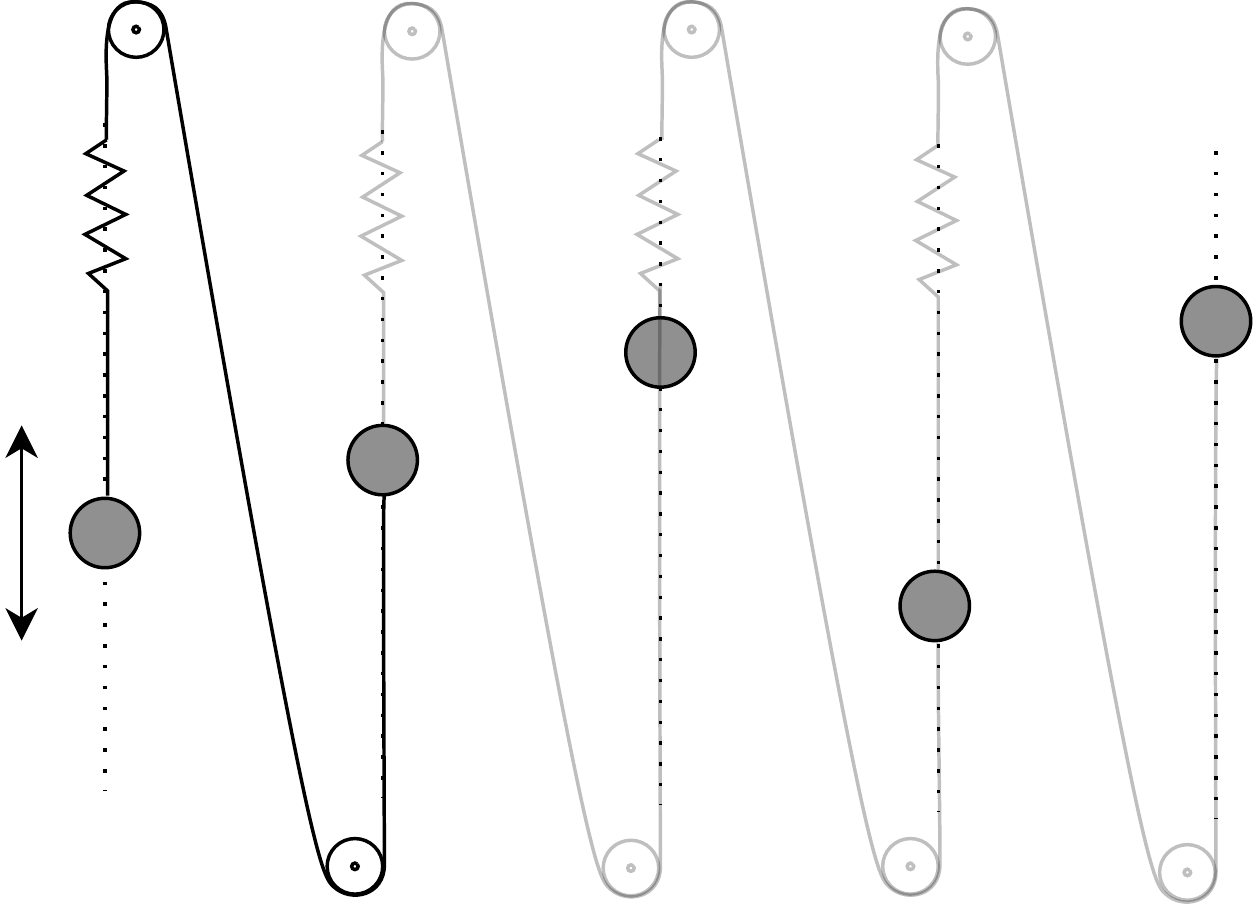,height=5cm}
\caption{The model of this section: spheres in a viscous fluid, only moving up and down, and connected by elastic springs. The spring measures the difference in displacement $u_i-u_{i+1}$.}
\end{figure}

The energy of this discrete system arises from springs that connect the spheres, in such a way that the energy of the spring connecting sphere $i$ to sphere $i+1$ is $k(u_i-u_{i+1})^2/2$. Therefore the total energy is
\[
\E_n(u_1,\dots, u_n) = \sum_{i=1}^{n-1} \frac k2 (u_i-u_{i+1})^2.
\]

Movement of the spheres requires displacement of the surrounding fluid, and therefore dissipates energy. The dissipation rate corresponding to a single sphere moving with velocity $v$ is assumed to be $cv^2/2$, where $c>0$ depends on the fluid. (Stokes' law gives $c=6\pi R\eta$, where $R$ is the radius of the sphere and $\eta$ the viscosity of the fluid). For a system of moving spheres the total dissipation rate is therefore
\[
\sum_{i=1}^n \frac c2 \dot u_i^2.
\]

We now construct a gradient flow by putting energy and dissipation together, resulting in the set of equations
\begin{align*}
c\dot u_1 &= k(u_2-u_1)\\
c\dot u_i &= k(u_{i+1}-2u_i + u_{i-1}) \qquad \text{for }i=2,\dots,n-1\\
c\dot u_n &= k(u_{n-1}-u_n)
\end{align*}

We can now take the continuum limit, either in the energy and dissipation, or in the equations above; both yield the same limit, which is the $L^2$-gradient flow of the energy 
\[
\E(u) := \int |\nabla u|^2, 
\]
and the diffusion equation~\eqref{eq:diffusion}. However, the important modelling choices are already clear in the discrete case. These two choices, viscous dissipation and interparticle coupling, are very reasonable in themselves, but the unnatural  restriction to vertical movement and the just-as-unnatural spring arrangement make this model a very strange one.

}

\chapter{Remarks and connections with thermodynamics}
\label{ch:thermodynamics}

\section{Boundary conditions}

The examples in these notes all focus on boundary conditions for which the system is \emph{closed}---in the sense that the boundary conditions do not add or take away energy. For other boundary conditions this will be different. 

Boundary conditions in models can play two roles. First there is the type of boundary conditions that reflects the interaction with the part of the world that is not represented in the model. For such boundary conditions it is very hard to give general guidelines, since practically anything is possible, and in general the driving functional need no longer be decreasing along solutions.

The second type of boundary condition interfaces with another part of the model. Examples of this are two domains connected by their boundaries, or free- and moving-boundary problems. In this case the natural approach is to include the boundary and its properties in the model, in the same way as the other aspects. As an example, in Section~\ref{sec:ex:vesicle} we consider a moving-boundary problem, in which the moving boundary is the wall of a vesicle. Fluid can move through the boundary, but in doing so it dissipates energy, and therefore the cross-boundary flow enters in the process vector and in the dissipation. As a result, boundary conditions at this moving interface arise in the derivation of the equations.

\section{Free energy from a thermodynamic point of view}

For this section~\cite[Sec.~5.1]{Schroeder07} is a good reference. 

\subsection{The `free' in free energy: \emph{available} work}

One thermodynamic situation in which `free energy' arises is that of zero pressure and constant temperature. The traditional thermodynamic definition of \emph{Gibbs free energy} in this context is $G = E-TS$, where $E$ is the energy of the system, $S$ the entropy (with the physically traditional sign) and $T$ the temperature. This is the same formula as we found in Section~\ref{sec:modelling-free-energy}.

Imagine that we create this system by starting with nothing, and slowly (quasistatically) adding material, performing work, and transferring heat, until we reach final energy $E$ and final entropy $S$. Since we do this under constant temperature, the traditional thermodynamic relation $dQ = TdS$ for quasistatic evolutions implies that we need to transfer exactly $TS$ heat from the surroundings into the system, if we normalize the empty system to have zero entropy. The remaining energy $E-TS$ then needs to be supplied in the form of work on the system, in such a way that the entropy does not increase.
Therefore $E-TS$ is that amount of work that we need to do to create the system at energy $E$ and entropy~$S$, taking into account that the environment can supply heat at temperature $T$.

The word `free' actually comes from performing the opposite operation, annihilating the system by performing work on the environment and transferring the entropy into the environment by heat transfer. `Free' then refers to the fact that the transfer of entropy requires transfer of heat, and therefore `costs' energy; the remaining energy $E-TS$ is `available' to perform work on the environment.

\subsection{Free energy as driving force}

Section~\ref{subsec:free-energy} explains the combination $E-TS$ in a different way: $E/T$ is an approximation of the change in the entropy of the heat bath (or more generally, the change in the entropy of the rest of the universe) as a result of withdrawing energy $E$ from the universe and putting it into the system. The sum $S - E/T$ then is the total energy of the system and the rest of the universe, i.e. of the whole universe.

This shows that entropy and free energy are really the same concept: the free energy is simply an approximation of the entropy of the universe taking into account only the behaviour of the system and the exchange of energy. 

The question why free energy should be a driving force for any given process is therefore the same as the question why entropy should be one. For this question I know of only one answer, which is the one given in Chapters~\ref{ch:entropy-free-energy-stationary} and~\ref{ch:Wasserstein-dissipation}. \red{Talk about Leli\`evre's argument}

\red{
\subsection{Dissipated free energy is transformed into heat}

Why is this? And why do we recover `maximal' work? Is that related to the gradient-flow optimization?
}

%
%
%
%
%


\appendix
\chapter{Elements of measure theory}
\label{app:measure-theory}


Throughout these notes we work with a measurable space $(\Omega,\Sigma)$ consisting of a topological set $\Omega$ with a $\sigma$-algebra $\Sigma$ that contains the  Borel $\sigma$-algebra $\B(\Omega)$. We always assume that $\Omega$ is complete, metrizable, and separable. $\M(\Omega)$ is the set of all (finite or infinite) signed measures on $(\Omega,\B(\Omega))$, and $\P(\Omega)$ is the set of all non-negative measures $\mu\in\M(\Omega)$ with $\mu(\Omega)=1$. For $\rho\in \P(\Omega)$, $L^2(\rho)$ is the set of measurable functions on $\Omega$ with finite $L^2(\rho)$-norm:
\[
\|f\|_{L^2(\rho)}^2 := \int_\Omega f^2 \, d\rho.
\]
A sequence of measures $\mu_n\in\M(\Omega)$ is said to converge \emph{narrowly} to $\mu\in\M(\Omega)$ if 
\[
\int_\Omega f\, d\mu_n \stackrel{n\to\infty}\longrightarrow \int_\Omega f\, d\mu
\qquad\text{for all }f\in C_b(\Omega).
\]
The Lebesgue measure on $\R^d$ is indicated by $\Lebesgue^d$.

The \emph{push-forward} of a measure $\mu$ by a Borel measurable mapping $\varphi:\Omega\to\Omega$ is the measure $\varphi_\#\mu$ defined by
\[
\varphi_\#\mu(A) := \mu(\varphi^{-1}(A))\qquad
\text{for any }A\in \B(\Omega).
\]
It satisfies the identity
\[
\int_\Omega f(y) \,\varphi_\#\mu(dy) = \int_\Omega f(\varphi(x))\, \mu(dx)
\qquad\text{for all }f\in C_b(\Omega).
\]


\bibliography{ref}
\bibliographystyle{alpha}

\printindex

\end{document}